\newcommand{\Comments}{0}
\definecolor{gray}{gray}{0.5}
\definecolor{darkgreen}{rgb}{0,0.5,0}
\newcommand{\mynote}[2]{\ifnum\Comments=1\textcolor{#1}{#2}\fi}
\newcommand{\mytodo}[2]{\ifnum\Comments=1%
  \fi}
\newcommand{\raft}[1]{\mytodo{darkgreen!20!white}{RF: #1}}
\renewcommand{\bot}[1]{\mytodo{red!20!white}{BW: #1}}
\newtheorem{theorem}{Theorem}
\newtheorem{lemma}{Lemma}
\newtheorem{claim}{Claim}
\theoremstyle{definition}
\newtheorem{definition}{Definition}
\newcommand{\F}{\mathcal{F}}
\newcommand{\X}{\mathcal{X}}
\newcommand{\Y}{\mathcal{Y}}
\newcommand{\Z}{\mathcal{Z}}
\DeclareMathOperator*{\E}{\mathbb{E}}
\DeclareMathOperator*{\argmax}{\text{argmax}}
\newcommand{\ones}{\mathbf{1}} 
\newcommand{\reals}{\mathbb{R}}
\newcommand{\z}{z}
\def\part{\textsc{Participant}}
\renewcommand{\Comments}{0}
\title{Bounded-Loss Private Prediction Markets}
\author{
  Rafael Frongillo \\
  Colorado Boulder\\
  \texttt{raf@colorado.edu} \\
  \And
  Bo Waggoner \\
  Microsoft Research\\
  \texttt{bwag@colorado.edu}
}
\begin{document}
\maketitle

\begin{abstract}
  Prior work has investigated variations of prediction markets that preserve participants' (differential) privacy, which formed the basis of useful mechanisms for purchasing data for machine learning objectives.
  Such markets required potentially unlimited financial subsidy, however, making them impractical.
  In this work, we design an adaptively-growing prediction market with a bounded financial subsidy, while achieving privacy, incentives to produce accurate predictions, and precision in the sense that market prices are not heavily impacted by the added privacy-preserving noise.
  We briefly discuss how our mechanism can extend to the data-purchasing setting, and its relationship to traditional learning algorithms.
\end{abstract}

\section{Introduction}

In a prediction market, a platform maintains a prediction (usually a probability distribution or an expectation) of a future random variable such as an election outcome.
Participants' trades of financial securities tied to this event are translated into updates to the prediction.
Prediction markets, designed to aggregate information from participants, have gained a substantial following in the machine learning literature.
One reason is the overlap in goals (predicting future outcomes) as well as techniques (convex analysis, Bregman divergences), even at a deep level: the form of market updates in standard automated market makers have been shown to mimic standard online learning or optimization algorithms in many settings~\cite{frongillo2012interpreting,abernethy2013efficient,abernethy2014information,frongillo2015convergence}.
Beyond this research-level bridge, recent papers have suggested prediction market mechanisms as a way of \emph{crowdsourcing} data or algorithms for machine learning, usually by providing incentives for participants to repeatedly update a centralized hypothesis or prediction~\cite{abernethy2011collaborative,waggoner2015market}.

One recently-proposed mechanism to purchase data or hypotheses from participants is that of Waggoner, et al.~\cite{waggoner2015market}, in which participants submit updates to a centralized market maker, either by directly altering the hypothesis, or in the form of submitted data; both are interpreted as buying or selling shares in a market, paying off according to a set of holdout data that is revealed after the close of the market.
The authors then show how to preserve \emph{differential privacy} for participants, meaning that the content of any individual update is protected, as well as natural accuracy and incentive guarantees.

One important drawback of Waggoner, et al.~\cite{waggoner2015market}, however, is the lack of a \emph{bounded worst-case loss} guarantee:
as the number of participants grows, the possible financial liability of the mechanism grows without bound.
In fact, their mechanism cannot achieve a bounded worst-case loss without giving up privacy guarantees.
Subsequently, Cummings, et al.~\cite{cummings2016possibilities} show that \emph{all} differentially-private prediction markets of the form proposed in~\cite{waggoner2015market} must suffer from unbounded financial loss in the worst case.
Intuitively, one could interpret this negative result as saying that the randomness of the mechanism, which must be introduced to preserve privacy, also creates \emph{arbitrage} opportunities for participants: by betting against the noise, they collectively expect to make an unbounded profit from the market maker.
Nevertheless, Cummings, et al.\ leave open the possibility that mechanisms outside the mold of Waggoner, et al.\ could achieve both privacy and a bounded worst-case loss.

In this paper, we give such a mechanism: the first private prediction market framework with a bounded worst-case loss.
When applied to the crowdsourcing problems stated above, this now allows the mechanism designer to maintain a fixed budget.
Our construction and proof proceeds in two steps.

We first show that by adding a small transaction fee to the mechanism of \cite{waggoner2015market}, one can eliminate financial loss due to arbitrage while maintaining the other desirable properties of the market.
The key idea is that a carefully-chosen transaction fee can make each trader subsidize (in expectation) any arbitrage that may result from the noise preserving her privacy.
Unless prices already match her beliefs quite closely, however, she still expects to make a profit by paying the fee and participating.
We view this as a positive result both conceptually---it shows that arbitrage opportunities are not an insurmountable obstacle to private markets---and technically---the designer budget grows very slowly, only $O((\log T)^2)$, with the number of participants $T$.

Nonetheless, this first mechanism is still not completely satisfactory, as the budget is superconstant in $T$, and $T$ must be known in advance.
This difficulty arises not from arbitrage, but (apparently) a deeper constraint imposed by privacy that forces the market to be large relative to the participants.
\raft{What do you mean by ``large''?  Depth, liquidity, small sensitivity?  (I assume the latter, but wanted to be sure; in any case we can say ``depth'' in quotes.)}\bot{I guess small sensitivity, but kind of all of them...}
Our second and main result overcomes this final hurdle.
We construct a sequence of adaptively-growing markets that are syntactically similar to the ``doubling trick'' in online learning.
The key idea is that, in the market from our first result, only about $(\log T)^2$ of the $T$ participants can be \emph{informational} traders; after this point, additional participants do not cost the designer any more budget, yet their transaction fees can raise significant funds.
So if the end of a stage is reached, the market activity has actually generated a surplus which subsidizes the initial portion of the next stage of the market.

\section{Setting}

In a cost-function based prediction market, there is an observable future outcome $Z$ taking values in a set $\Z$.
The goal is to predict the expectation of a random variable $\phi: \Z \to \reals^d$.
We assume $\phi$ is a bounded random variable, as otherwise prediction markets with bounded financial loss are not possible.
Participants will buy from the market \emph{contracts}, each parameterized by a vector $r \in \reals^d$.
The contract represents a promise for the market to pay the owner $r \cdot \phi(Z)$ when $Z$ is observed.
Adopting standard financial terminology, in our model there are $d$ \emph{securities} $j=1,\dots,d$, and the owner of a \emph{share} in security $j$ will receive a payoff of $\phi(Z)_j$, that is, the $j$th component of the random variable.
Thus a contract $r \in \reals^d$ contains $r_j$ shares of security $j$ and pays off $\sum_{j=1}^d r_j \phi(Z)_j = r \cdot \phi(Z)$.
Note that $r_j < 0$, or ``short selling'' security $j$, is allowed.

The market maintains a \emph{market state} $q^t \in \reals^d$ at time $t=0,\dots,T$, with $q^0 = 0$.
Each trader $t=1,\dots,T$ arrives sequentially and purchases a contract $dq^t \in \reals^d$, and the market state is updated to $q^t = q^{t-1} + dq^t$.
In other words, $q^t = \sum_{s=1}^t dq^s$, the sum of all contracts purchased up to time $t$.
The price paid by each participant is determined by a convex \emph{cost function} $C: \reals^d \to \reals$.
Intuitively, $C$ maps $q^t$ to the total price paid by all agents so far, $C(q^t)$.
Thus, participant $t$ making trade $dq^t$ when the current state is $q^{t-1}$ pays $C(q^{t-1} + dq^t) - C(q^{t-1})$.
Notice that the \emph{instantaneous prices} $p^t = \nabla C(q^t)$ represent the current price per unit of infinitesimal purchases, with the $j$th coordinate representing the current price per share of the $j$th security.

The prices $\nabla C(q)$ are interpreted as predictions of $\E \phi(Z)$, as an agent who believes the $j$th coordinate is too low will purchase shares in it, raising its price, and so on.
This can be formalized through a learning lens: It is known~\citep{abernethy2013efficient} that agents in such a market maximize expected profit by minimizing an expected Bregman divergence between $\phi(Z)$ and $\nabla C(q)$; of course, it is known that $\nabla C(q) = \E \phi(Z)$ minimizes risk for any divergence-based loss~\cite{savage1971elicitation,banerjee2005optimality,abernethy2012characterization}.
(The Bregman divergence is that corresponding to $C^*$, the convex conjugate of $C$.)

\textbf{Price Sensitivity.}~
The price sensitivity of a cost function $C$ is a measure of how quickly prices respond to trades, similar to ``liquidity'' discussed in \citet{abernethy2013efficient,abernethy2014general} and earlier works.
Formally, the \emph{price sensitivity} $\lambda$ of $C$ is the supremum of the operator norm of the Hessian of $C$, with respect to the $\ell_1$ norm.\footnote{For convenience we will assume $C$ is twice differentiable, though this is not necessary.}
In other words, if $c = \|q-q'\|_1$ shares are purchased, then the change in prices $\| \nabla C(q) - \nabla C(q') \|_1$ is at most $\lambda c$.

Price sensitivity is directly related to the worst-case loss guarantee of the market, as follows.
Those familiar with market scoring rules may recall that with scoring rule $S$, the loss can be bounded by (a constant times) the largest possible score.
Hence, scaling $S$ by a factor $\frac{1}{\lambda}$ immediately scales the loss bound by $\frac{1}{\lambda}$ as well.
Recall that $S$ is defined by a convex function $G$, the convex conjugate of $C$.
Scaling $S$ by $\frac{1}{\lambda}$ is equivalent to scaling $G$ by $\frac{1}{\lambda}$.
By standard results in convex analysis, this is equivalent to transforming $C$ into $C_\lambda(q) = \frac{1}{\lambda} C\left(\lambda q \right)$, an operation known as the perspective transform.  This in turn scales the price sensitivity by $\lambda$ by the properties of the Hessian.

Price sensitivity is also related to the total number of trades required to change the prices in a market.
If we assume each trade consists of at most one share in each security, then $\frac{1}{\lambda}$ trades are necessary to shift the predictions to an arbitrary point from an arbitrary point.

\textbf{Convention: normalized, scaled $C$.}~
In the remainder of the paper, we will suppose that we start with some convex cost function $C_1$ whose price sensitivity equals $1$ and worst-case loss bounded by some constant $B_1$.
Then, to obtain price sensitivity $\lambda$, we use the cost function $C(\cdot) = \frac 1 \lambda C_1(\lambda \cdot)$.
As discussed above, $C$ has price sensitivity at most $\lambda$ and a worst-case loss bound of $B = B_1/\lambda$.
(This assumption is without loss of generality, as any cost function that guarantees a bounded worst-case loss can be scaled such that its price sensitivity is $1$.)

\subsection{Prior work}
To achieve differential privacy for trades of a bounded size (which will be assumed), the general approach is to add random noise to the ``true'' market state $q$ and publish this noisy state $\hat{q}$.
The privacy level thus determines how close $\hat{q}$ is to $q$.
The distance from $\nabla C(\hat{q})$ to $\nabla C(q)$ is then controlled by the price sensitivity $\lambda$.
For a fixed noise and privacy level, a smaller $\lambda$ leads to small impact of noise on prices, meaning very good accuracy.
However, decreasing $\lambda$ does not come for free: the worst-case financial loss of to the market designer scales as $1/\lambda$.

The market of~\cite{waggoner2015market} adds controlled and correlated noise over time, in a manner similar to the ``continual observation'' technique of differential privacy.
This reduces the influence of noise on accuracy to polylogarithmic in $T$, the number of participants.
Their main result for the prediction market setting studied here is as follows.
\begin{theorem}[\cite{waggoner2015market}] \label{thm:classic-summary}
  Assuming that all trades satisfy $\| dq^t \|_1 \leq 1$, the private mechanism is $\epsilon$-differentially private in the trades $dq^1,\dots,dq^T$ with respect to the output $\hat{q}^1,\dots,\hat{q}^T$.
  Further, to satisfy $\|p^t - \hat{p}^t\|_1 \leq \alpha$ for all $t$, except with probability $\gamma$, it suffices for the price sensitivity to be
  \begin{align}\label{eq:lambda-star}
    \lambda^* = \frac{\alpha ~ \epsilon}{4\sqrt{2} d \lceil \log T \rceil \ln(2Td/\gamma)} ~.
  \end{align}
\end{theorem}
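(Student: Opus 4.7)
The approach is to build on the standard binary tree (``continual observation'') technique of differential privacy: maintain running partial sums of the trades in a tree structure with fresh Laplace noise at each internal node, release each $\hat{q}^t$ as a noisy prefix sum, and then transfer the resulting concentration bound on $\|\hat{q}^t - q^t\|_1$ to a bound on $\|\hat{p}^t - p^t\|_1$ using the price sensitivity $\lambda$. The proof splits naturally into three steps.

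\textbf{Step 1 (privacy).}
Instantiate a binary tree over the stream $dq^1,\dots,dq^T$: each internal node at level $\ell$ stores the sum of the $dq^s$ in its block of $2^\ell$ consecutive trades, plus an independent Laplace vector with scale $b$ in each of the $d$ coordinates. Each trade $dq^t$ is contained in exactly $\lceil \log T\rceil$ nodes (one per level), and by the assumption $\|dq^t\|_1 \leq 1$ each such node-release has $\ell_1$-sensitivity at most $1$. Calibrating $b = \lceil \log T\rceil/\epsilon$ and applying basic composition across the $\lceil \log T\rceil$ affected nodes yields $\epsilon$-differential privacy for the full transcript $\hat{q}^1,\dots,\hat{q}^T$, since each $\hat{q}^t$ is a deterministic function of these noisy node values.

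\textbf{Step 2 (concentration).}
Each $\hat{q}^t$ is a sum of at most $\lceil \log T\rceil$ of the noisy nodes, so for every coordinate $j$ the error $\hat{q}^t_j - q^t_j$ is a sum of at most $\lceil \log T\rceil$ independent $\mathrm{Lap}(b)$ variables. Applying a standard Chernoff-style tail bound for such sums and union-bounding over the $T$ time steps and $d$ coordinates, I would obtain
\[
\Pr\!\left[\, \exists\, t \in \{1,\dots,T\}\;:\; \|\hat{q}^t - q^t\|_1 > \frac{4\sqrt{2}\, d\, \lceil \log T\rceil\, \ln(2Td/\gamma)}{\epsilon} \,\right] \leq \gamma .
\]

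\textbf{Step 3 (transfer to prices).}
By the definition of price sensitivity,
\[
\|\hat{p}^t - p^t\|_1 \;=\; \|\nabla C(\hat{q}^t) - \nabla C(q^t)\|_1 \;\leq\; \lambda\, \|\hat{q}^t - q^t\|_1 ,
\]
so choosing $\lambda \leq \lambda^*$ with $\lambda^*$ as in \eqref{eq:lambda-star} makes the right-hand side at most $\alpha$ whenever the event of Step~2 occurs, which happens with probability at least $1-\gamma$.

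\textbf{Expected main obstacle.}
Steps~1 and~3 are essentially immediate from the mechanism description and the definition of $\lambda$. The real work sits in Step~2: one must select a Laplace tail inequality that trades off cleanly between the $\sqrt{\log T}$ factor from summing $\lceil \log T\rceil$ independent Laplace samples and the $\ln(1/\delta)$ factor from the union bound over the $Td$ error events, and then track constants carefully so that the resulting bound collapses to the exact closed form appearing in $\lambda^*$.
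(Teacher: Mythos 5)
Your overall plan mirrors the paper's proof (Appendix~A, Theorems~\ref{thm:classic-private} and \ref{thm:classic-private-accurate} and Lemma~\ref{lem:classic-private-sharevec-accurate}): continual observation noise on a binary tree, a Laplace tail bound with a union bound over $Td$ coordinates, and finally the Lipschitz transfer $\|\hat{p}^t-p^t\|_1 \leq \lambda\|\hat{q}^t-q^t\|_1$. Steps~1 and~3 are structured exactly as the paper does them.

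However, there is a genuine error in Step~1 that propagates into Step~2. You claim each node-release has $\ell_1$-sensitivity at most $1$ because $\|dq^t\|_1\leq 1$, and then calibrate $b=\lceil\log T\rceil/\epsilon$. But differential privacy here is defined under the \emph{replace-one} model: neighboring inputs $\vec{dq},\vec{dq}'$ differ by substituting one trade for another. If $\|dq^t\|_1\leq 1$ and $\|dq'^t\|_1\leq 1$, then the partial sum can change by $\|dq^t-dq'^t\|_1\leq 2$, so the sensitivity is $2$, not $1$. Consequently the noise scale must be $b=2\lceil\log T\rceil/\epsilon$ (as the paper uses). With your choice $b=\lceil\log T\rceil/\epsilon$, each node provides only $(2\epsilon/\lceil\log T\rceil)$-DP, and composition over $\lceil\log T\rceil$ nodes gives $2\epsilon$-DP, not $\epsilon$. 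Relatedly, the concentration bound you state in Step~2 — $4\sqrt{2}\,d\,\lceil\log T\rceil\ln(2Td/\gamma)/\epsilon$ — is not what you would obtain from $b=\lceil\log T\rceil/\epsilon$: applying the standard sum-of-Laplacians bound (e.g., Corollary 12.3 of Dwork--Roth, as the paper does) with your smaller $b$ would give half that quantity, i.e.\ $2\sqrt{2}\,d\,\lceil\log T\rceil\ln(2Td/\gamma)/\epsilon$. So your Step~2 as written implicitly uses $b=2\lceil\log T\rceil/\epsilon$ while Step~1 explicitly uses $b=\lceil\log T\rceil/\epsilon$; the two steps are inconsistent. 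The fix is simply to use sensitivity $2$ and $b=2\lceil\log T\rceil/\epsilon$ throughout, after which the stated value of $\lambda^*$ follows.
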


\subsection{Our setting and desiderata}
This paper builds on the work of \citet{waggoner2015market} to overcome the negative results of \citet{cummings2016possibilities}.
Here, we formalize our setting and four desirable properties we hope to achieve.

Write a prediction market mechanism as a function $M$ taking inputs $\vec{dq}=dq^1,\dots,dq^T$ and outputting a sequence of market states $\hat{q}^1,\dots,\hat{q}^T$.
Here $\hat{q}^t$ is thought of as a noisy version of $q^t = \sum_{s\leq t} dq^s$.
Each of these states is associated with a prediction $\hat{p}^t$ in the set of possible prices (expectations of $\phi$), while the state $q^t$ is associated with the ``true'' underlying prediction $p^t$.
\begin{definition}[Privacy]
  $M$ satisfies \emph{$(\epsilon,\delta)$-differential privacy} if for all pairs of inputs $\vec{dq},\vec{dq'}$ differing by only a single participants' entry, and for all sets $S$ of possible outputs, $\Pr[M(\vec{dq}) \in S] \leq e^{\epsilon} \Pr[M(\vec{dq'}) \in S] + \delta$.
  If furthermore $\delta = 0$, we say $M$ is \emph{$\epsilon$-differentially private}.
\end{definition}
\begin{definition}[Precision]
  $M$ has \emph{$(\alpha,\gamma)$ precision} if for all $\vec{dq}$, with probability $1-\gamma$, we have $\|\hat{p}^t - p^t\|_1 \leq \alpha$ for all $t$.
\end{definition}
\begin{definition}[Incentives]
  $M$ has \emph{$\beta$-incentive} to participate if, for all beliefs $p = \E \phi(Z)$, if at any point $\|\hat{p}^t - p \|_{\infty} > \beta$, then there exists a participation opportunity that makes a strictly positive profit in expectation with respect to $p$.
\end{definition}
For the budget guarantee, we must formalize the notion that participants may respond to the noise introduced by the mechanism.
Following \citet{cummings2016possibilities}, let a \emph{trader strategy} $\vec{s} = (s^1,\dots,s^T)$ where each $s^t$ is a possibly-randomized function of the form $s^t(dq^1,\dots,dq^{t-1};\hat{q}^1,\dots,\hat{q}^{t-1}) = dq^t$, i.e. a strategy taking the entire history prior to $t$ and outputting a trade $dq^t$.
Let $L(M, \vec{s}, z)$ be a random variable denoting the financial loss of the market $M$ against trader strategy $\vec{s}$ when $Z=z$, which for the mechanism described above is simply
\centerline{$\displaystyle L(M, \vec{s}, z) = \sum_{t=1}^T \left[ C(\hat{q}^t) - C(\hat{q}^t + dq^t) - dq^t \cdot \phi(z) \right]$~.}
\begin{definition}
  $M$ guarantees \emph{designer budget $B$} if, for any trader strategy $\vec{s}$ and all $z$, $\E L(M, \vec{s}, z) \leq B$, where the expectation is over the randomness in $M$ and each $s^t$.
\end{definition}

\section{Slowly-Growing Budget} \label{sec:trans-fee}
The private market of \citet{waggoner2015market} causes unbounded loss for the market maker in two ways.
The first is from traders betting against the random noise introduced to protect privacy.
This is a key idea leveraged by \citet{cummings2016possibilities} to show negative results for private markets.
In this section, we show that a transaction fee can be chosen to exactly balance the expected profit from this type of arbitrage.\footnote{Intuitively, it is enough for the fee to cover arbitrage amounts in expectation, because a trader must pay the fee to trade before the random noise is drawn and any arbitrage opportunity is revealed.}
We will show that this fee is still small enough to allow for very accurate prices.\footnote{For instance, if the current price of a security is $0.49$ and a trader believes the true price should be $0.50$, she will purchase a share if the fee is $c < 0.01$.
    (For privacy, we limit each trade to a fixed size, say, one share.)} 
This transaction fee restores the worst-case loss guarantee to the inverse of the price sensitivity, just as in a non-private market.
The second way the market causes unbounded loss is to require price sensitivity to shrink as a function of $T$; this is addressed in the next section.

We show that with this carefully-chosen fee, the market still achieves precision, incentive, and privacy guarantees, but now with a worst-case market maker loss of $O((\log T)^2)$, much improved over the na\"ive $O(T)$ bound.
This is viewed as a positive result because the worst-case loss is growing quite slowly in the total number of participants, and moreover matches the fundamental ``informational'' worst-case loss one expects with price sensitivity $\lambda^*$.

\subsection{Mechanism and result}
Here we recall the private market mechanism of \cite{waggoner2015market}, adapted to the prediction market setting following \cite{cummings2016possibilities}.
We will express the randomness of the mechanism in terms of a ``noise trader'' for both intuition and technical convenience.
The market is defined by a cost function $C$ with price sensitivity $\lambda$, and parameters $c$ (transaction fee), $\epsilon$ (privacy), $\alpha, \gamma$ (precision), and $T$ (maximum number of participants).
There is a special trader we call the \emph{noise trader} who is controlled by the designer.
All actions of the noise trader are hidden and known only by the designer.
The designer publishes an initial market state $q^0 = \hat{q}^0 = 0$.
Let $T'$ denote the actual number of arrivals, with $T' \leq T$ by assumption.
Then, for $t=1,\dots,T'$:
\begin{enumerate}\setlength{\itemsep}{0pt}
  \item Participant $t$ arrives, pays a fee of $c$, and purchases bundle $dq^t$ with $\|dq^t\|_1 \leq 1$.
        The payment is $C(\hat{q}^t + dq^t) - C(\hat{q}^t)$.
  \item The noise trader purchases a randomly-chosen bundle $z^t$, called a noise trade, after selling off some subset $\{z^{t_1},\ldots,z^{t_k}\}$ of previously purchased noise trades for $t_i < t$, according to a predetermined schedule described below.
        Letting $w^t = z^t - \sum_{i=1}^k z^{t_i}$ denote this net noise bundle, the noise trader is thus charged $C(\hat{q}^t + dq^t + w^t) - C(\hat{q}^t + dq^t)$.
  \item The ``true'' market state is updated to $q^t = q^{t-1} + dq^t$, but is not revealed.
  \item The noisy market state is updated to $\hat{q}^t = \hat{q}^{t-1} + dq^t + w^t$ and is published.
\end{enumerate}
Finally, $z \in \Z$ is observed and each participant $t$ receives a payment $dq^t \cdot \phi(z)$.
For the sake of budget analysis, we suppose that at the close of the market, the noise trader sells back all of her remaining bundles; letting $w^{T'}$ be the sum of these bundles, she is charged $C(\hat{q}^{T'} - w^{T'}) - C(\hat{q}^{T'})$.

\textbf{Noise trades.}~
Each $z^t$ is a $d$-dimensional vector with each coordinate drawn from an independent Laplace distribution with parameter $b = 2\lceil \log T \rceil / \epsilon$.
To determine which bundles $z^s$ are sold at time $t$, write $t = 2^j m$ where $m$ is odd, and sell all bundles $z^s$ purchased during the previous $2^{j-1}$ time steps which are not yet sold.
Thus, the noise trader will sell bundles purchased at times $s = t-1, t-2, t-4, t-8, \dots, t-2^{j-1}$; in particular, when $t$ is odd we have $j=0$, so no previous bundles will be sold.

\textbf{Budget.}~
The total loss of the market designer can now be written as the sum of three terms: the loss of the market maker, the loss of the noise trader, and the gain from transaction fees.
By convention, the noise trader eventually sells back all bundles it purchases and is left with no shares remaining.
\begin{align}\label{eq:designer-budget}
  \!\!\!\!\!\!
  L(M, \vec{s}, z) = \overbrace{\sum_{t=1}^{T'} C(\hat{q}^{t-1}) - C(\hat{q}^{t-1} \!\!+ dq^t) + dq^t \!\cdot \phi(z)}^{\text{net loss of market maker}} + \overbrace{\sum_{t=1}^{T'} C(\hat{q}^{t-1}\!\!+dq^t) - C(\hat{q}^t)}^{\text{net loss of noise trader}} + \!\overbrace{c T'\vphantom{\sum_{t=1}^{T'}}}^{\text{fees}}\!\!.\!\!
\end{align}
The main result of this section is as follows.
\begin{theorem} \label{thm:private-loss-one-over-lambda}
  When each arriving participant pays a transaction fee $c = \alpha$, the private market with any $\lambda \leq \lambda^*$ from eq.~\eqref{eq:lambda-star} satisfies $\epsilon$-differential privacy, $(\alpha,\gamma)$-precision, $2\alpha$-incentive to trade, and budget bound $\frac{B_1}{\lambda}$, where $B_1$ is the budget bound of the underlying cost function $C_1$.
\end{theorem}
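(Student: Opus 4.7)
Both inherit directly from Theorem~\ref{thm:classic-summary}: the added per-trader transaction fee is a monetary charge that does not alter the published $\hat q^t$ sequence, the Laplace noise distribution, or the continual-observation schedule. So the analysis of Waggoner et al.\ carries over verbatim once $\lambda \leq \lambda^*$ from eq.~\eqref{eq:lambda-star}.

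\textbf{Incentive.} Suppose a trader's belief $p$ and some current observation $\hat p^t$ satisfy $|p_j - \hat p^t_j| > 2\alpha$ for some coordinate $j$. I would consider a unit trade in security $j$ in the direction of $p_j - \hat p^t_j$; by the $\lambda$-smoothness of $C$ (integrating $\nabla C$ along the unit ray with $\nabla^2 C$ bounded by $\lambda$), the cost of this trade exceeds the linear price term by at most $\lambda/2$, while its expected payoff captures the full $|p_j - \hat p^t_j|$ of mispricing. Subtracting the fee $c = \alpha$, the expected net profit is at least $|p_j - \hat p^t_j| - \lambda/2 - \alpha > 2\alpha - \lambda/2 - \alpha = \alpha - \lambda/2$, which is strictly positive since $\lambda \leq \lambda^* \ll \alpha$ by the explicit form of $\lambda^*$.

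\textbf{Budget (the main part).} My plan is to decompose the expected designer loss into an ``informational'' piece bounded by the intrinsic worst-case loss of $C$, plus a ``noise-induced arbitrage'' piece absorbed by the fees. Starting from eq.~\eqref{eq:designer-budget} and using the telescoping identity $C(q^{T'}) = \sum_{t=1}^{T'} [C(q^t) - C(q^{t-1})]$, the expected designer loss rearranges as
\[
\E L = \E\bigl[q^{T'} \cdot \phi(z) - C(q^{T'})\bigr] + \E \sum_{t=1}^{T'} \bigl[(C(q^t)-C(q^{t-1})) - (C(\hat q^{t-1}+dq^t) - C(\hat q^{t-1}))\bigr] - cT'.
\]
The first term is bounded by $B_1/\lambda$ deterministically (for every realization of $q^{T'}$ and every outcome $z$) via the worst-case loss bound on $C$. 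For each summand in the middle term, I would write it as
\[
\int_0^1 dq^t \cdot \bigl[\nabla C(q^{t-1}+s\,dq^t) - \nabla C(\hat q^{t-1}+s\,dq^t)\bigr]\, ds,
\]
and bound its absolute value by $\lambda \|\xi^{t-1}\|_1$ via the $\lambda$-smoothness of $C$ and $\|dq^t\|_1 \leq 1$, where $\xi^{t-1} = \hat q^{t-1} - q^{t-1}$ is the accumulated noise. By the continual-observation construction, $\xi^{t-1}$ is a sum of at most $\lceil \log T \rceil$ independent Laplace vectors with per-coordinate parameter $b = 2\lceil \log T \rceil / \epsilon$, so a standard second-moment estimate yields $\E \|\xi^{t-1}\|_1 = O(d(\log T)^{3/2}/\epsilon)$. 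Summing over the $T$ rounds and plugging in $\lambda \leq \lambda^*$ shows the total expected noise-induced loss is at most $\alpha T = cT$, which the fee term exactly absorbs, giving $\E L \leq B_1/\lambda$.

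\textbf{Hardest step.} The crux is the quantitative balancing in the budget argument: the per-trader fee $c = \alpha$ must simultaneously be small enough to preserve $2\alpha$-incentive and large enough to dominate the per-round expected arbitrage against the continual-observation noise at $\lambda = \lambda^*$. This balance is precisely what the form of $\lambda^*$ in eq.~\eqref{eq:lambda-star} is engineered to guarantee, and verifying it amounts to comparing the expected $\ell_1$-magnitude of $\xi^{t-1}$ scaled by $\lambda$ against $\alpha$.
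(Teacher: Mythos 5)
Your privacy and precision arguments match the paper's (inherit directly from the prior construction), and your incentive argument is the same idea with a slightly tighter constant ($\lambda/2$ instead of $\lambda$ from the paper's H\"older step). The budget bound, however, is a genuinely different route, and it does work—worth spelling out.

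The paper frames the extra loss through a \emph{noise trader} abstraction: it tracks each noise bundle $z^t$ from purchase to sale, uses Claim~\ref{claim:traders-between-noise-bundle} to show exactly $b(t)$ participants trade in between, bounds the per-bundle loss by $\lambda\, b(t)\, K$ with $K=\E\|z^t\|_2$ (Claim~\ref{claim:max-gap}), and sums over the dyadic structure to get $NTL \leq \tfrac{T\log T}{2}\lambda K = O(\lambda T \sqrt{d}(\log T)^2/\epsilon)$. You instead expand the designer's loss directly as $\E\bigl[q^{T'}\!\cdot\phi(z) - C(q^{T'})\bigr]$ plus a per-round revenue gap between the noiseless and noisy cost increments, bound each gap pointwise by $\lambda\|\xi^{t-1}\|_1$ using the $\ell_1\to\ell_1$ Lipschitz bound on $\nabla C$, and control $\E\|\xi^{t-1}\|_1$ by the second moment of a sum of $O(\log T)$ Laplace vectors, giving $O(\lambda T d (\log T)^{3/2}/\epsilon)$. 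The two decompositions are in fact algebraically equal: telescoping the noiseless increments shows your middle term equals the noise trader's net payments, so both are bounding the same quantity. Your approach is more direct (no need to reason about the dyadic sell schedule or which bundles are held when), at the cost of a slightly different scaling in $d$ and $\log T$; neither bound dominates the other, but both are comfortably subsumed by $\lambda \leq \lambda^*$ from eq.~\eqref{eq:lambda-star}, since $\lambda^*$ already carries an extra $\ln(2Td/\gamma)$ factor beyond what either budget argument requires. One small caution: the bound $\lambda\|\xi^{t-1}\|_1$ on each summand is an absolute-value bound (the noise can help or hurt), so the step should sum $\E\bigl|\cdot\bigr|$ rather than treating the term as one-signed—this does not affect the conclusion but is worth stating.
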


\subsection{Proof ideas: privacy, precision, incentives}
The differential privacy and precision claims follow directly from the prior results, as nothing has changed to impact them.
The incentive claim is not technically involved, but perhaps subtle: the transaction fee should be high enough to eliminate expected profit from arbitrage, yet low enough to allow for profit from information.
The key point is that the transaction fee is a constant, but the farther the prices are from the trader's belief, the more money she expects to make from a constant-sized trade.
The transaction fee creates a ball of size $2\alpha$ around the current prices where, if one's belief lies in that ball, then participation is not profitable.

We give most of the proof of the designer budget bound, with some claims deferred to the full version.
\begin{lemma}[Budget bound] \label{lemma:fee-wcl}
  The transaction-fee private market with any price sensitivity $\lambda \leq \lambda^*$ guarantees a designer budget bound of $\frac{B_1}{\lambda}$.
\end{lemma}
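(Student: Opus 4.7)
My plan is to decompose $L$ from \eqref{eq:designer-budget} as the sum of a standard (non-private) market maker worst-case loss, bounded deterministically by $B_1/\lambda$, and a residual ``arbitrage'' term $\mathrm{Arb}$ whose expectation is absorbed by the fee revenue $cT'=\alpha T'$. Combining the two sums in \eqref{eq:designer-budget} with the noise trader's terminal sell-back (her internal buy/sell cash flows cancel from the designer's perspective, and after the sell-back the aggregate outstanding position is $q^{T'}$) and telescoping cost-function differences gives
\[
L \;=\; \bigl[q^{T'}\cdot\phi(z) - C(q^{T'})\bigr] \;+\; \mathrm{Arb} \;-\; cT',
\]
where $\mathrm{Arb} := \sum_{t=1}^{T'}\bigl[\bigl(C(q^{t-1}+dq^t)-C(q^{t-1})\bigr) - \bigl(C(\hat{q}^{t-1}+dq^t)-C(\hat{q}^{t-1})\bigr)\bigr]$ captures the gap between the noiseless and noisy per-round participant payments. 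The first bracket is the standard worst-case loss of $C$ at the \emph{true} trajectory $q^0,\dots,q^{T'}$, which is at most $B_1/\lambda$ by the perspective-scaling convention of Section~2.

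Next, I would estimate $\mathrm{Arb}_t$ for each $t$. Writing the summand via the integral mean-value theorem as $\int_0^1[\nabla C(q^{t-1}+s\,dq^t)-\nabla C(\hat{q}^{t-1}+s\,dq^t)]\cdot dq^t\,ds$ and inserting $\pm\nabla C(q^{t-1})$ and $\pm\nabla C(\hat{q}^{t-1})$ inside the brackets isolates the prediction gap $p^{t-1}-\hat{p}^{t-1}$, while the ``inner-movement'' remainders are controlled by the price-sensitivity bound $\|\nabla C(x)-\nabla C(y)\|_1\le\lambda\|x-y\|_1$. Then H\"older's inequality with $\|dq^t\|_1\le 1$ yields the per-round estimate $|\mathrm{Arb}_t|\le\|\hat{p}^{t-1}-p^{t-1}\|_1 + O(\lambda)$.

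Finally, I would take expectation. By Theorem~\ref{thm:classic-summary} the precision event $\mathcal{G}=\{\|\hat{p}^{t-1}-p^{t-1}\|_1\le\alpha\ \forall t\}$ holds with probability at least $1-\gamma$; on $\mathcal{G}$, summing the per-round bound yields $\mathrm{Arb}\le(\alpha+O(\lambda))T'\le\alpha T'$, since $\lambda\le\lambda^*$ is negligible next to $\alpha$. On the complement $\mathcal{G}^c$ I would fall back on the coarser bound $|\mathrm{Arb}_t|\le\lambda\|\hat{q}^{t-1}-q^{t-1}\|_1$ and control its contribution via Laplace tail estimates, recalling that $\hat{q}^{t-1}-q^{t-1}$ is a sum of $O(\log T)$ independent Laplace vectors of parameter $b=2\lceil\log T\rceil/\epsilon$; together with $\Pr[\mathcal{G}^c]\le\gamma$ and exponential tail decay, this contributes only a lower-order term to $\E[\mathrm{Arb}]$. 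Combining gives $\E[L]\le B_1/\lambda+\alpha T'-\alpha T'=B_1/\lambda$. The delicate step is this last one: the trader's adaptive strategy can couple $dq^t$ to the realized noise in $\hat{q}^{t-1}$, so a naive ``Laplace is zero-mean, arbitrage averages out'' argument fails; the precision-based pathwise bound handles the typical event and the Laplace tails handle the rare failure event, but one must verify carefully that the latter contribution is truly dominated so the fee $c=\alpha$ suffices in \emph{expectation} rather than merely on the typical event.
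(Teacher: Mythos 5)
Your decomposition is the same as the paper's in disguise: your $\mathrm{Arb}=\sum_t\bigl[(C(q^t)-C(q^{t-1}))-(C(\hat q^{t-1}+dq^t)-C(\hat q^{t-1}))\bigr]=C(q^{T'})-C(0)-P$ (with $P$ the total participant payments) is algebraically identical to the paper's noise-trader loss $NTL$, and your leading bracket $q^{T'}\cdot\phi(z)-C(q^{T'})$ is the paper's $WC_0\le B_1/\lambda$. So far so good. Where the two proofs genuinely diverge is in how the residual term is controlled, and your version has two real gaps.

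First, the per-round estimate $|\mathrm{Arb}_t|\le\|\hat p^{t-1}-p^{t-1}\|_1+O(\lambda)$ does not close. Once you insert $\pm\nabla C(q^{t-1})$ and $\pm\nabla C(\hat q^{t-1})$ you pick up two Bregman-type remainders whose difference is only bounded in magnitude by $O(\lambda)$, and summing gives $\mathrm{Arb}\le(\alpha+O(\lambda))T'$, so $\mathrm{Arb}-cT'\le O(\lambda T')$ rather than $\le 0$. This slack is not negligible: at $\lambda=\lambda^*$ and large $T$, $\lambda T'$ vastly exceeds $B_1/\lambda$, so the conclusion $\E L\le B_1/\lambda$ would not follow. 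The fix is to avoid the insertion entirely: since the two integrands in your mean-value representation are translates by the same $s\,dq^t$, the price-sensitivity bound gives directly $|\mathrm{Arb}_t|\le\lambda\|q^{t-1}-\hat q^{t-1}\|_1\cdot\|dq^t\|_\infty\le\lambda\|q^{t-1}-\hat q^{t-1}\|_1$, with no $O(\lambda)$ residual, and by the definition of $\lambda^*$ this is $\le\alpha$ on the high-probability event. You should state the bound in share space, not price space.

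Second, and more fundamentally, your route forces you to condition on the precision event and then separately control the complement via Laplace tails, carefully tracking that the adaptive trader can correlate $dq^t$ with the realized noise. You acknowledge this is the delicate step but do not actually carry it out, and it is genuinely nontrivial: the conditional expectation of $\|q^{t-1}-\hat q^{t-1}\|_1$ on the bad event must be multiplied by $\Pr[\text{bad}]\le\gamma$ and shown to be dominated, and one must be careful that the union-bound structure does not reintroduce $T$-dependence. The paper sidesteps all of this. Rather than bounding arbitrage per participant, it bounds the noise trader's loss per noise bundle: by Claim~\ref{claim:traders-between-noise-bundle} exactly $b(t)$ traders separate the buy and sell of $z^t$, and by Claim~\ref{claim:max-gap} the round-trip loss on $z^t$ is at most $\lambda b(t)\|z^t\|_2$ pathwise, for any adaptive trader behavior. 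Taking expectation only over $z^t$ (not conditioning) and summing with $\sum_t b(t)=\tfrac{T'\log T'}{2}$ gives $\E[NTL]\le\tfrac{T'\log T'}{2}\lambda K$ with $K=\E\|z^t\|_2$, which is $\le cT'$ once $\lambda\le\lambda^*$. No good/bad split, no tail estimates. So the paper's per-bundle accounting, leaning on the doubling structure of the continual-observation noise, is the key idea your proposal is missing; without it you are left either with an $O(\lambda T')$ leak or an unfinished tail argument.
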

\begin{proof}
  Let $c$ be the transaction fee; we will later take $c=\alpha$.
  Then the worst-case loss from eq.~\eqref{eq:designer-budget} is
  \[ WC(\lambda,T') := WC_0(\lambda, T') + NTL(\lambda,T') - T'c~, \]
  where $WC_0(\lambda, T')$ is the worst-case loss of a standard prediction market maker with parameter $\lambda$ and $T'$ participants, $NTL(\lambda,T')$ is the worst-case noise trader loss, and $T'c$ is the revenue from $T'$ transaction fees of size $c$ each.

  The worst-case loss of a standard prediction market maker is well-known; see e.g.~\cite{abernethy2013efficient}.
  By our normalization and definition of price sensitivity, we thus have $WC_0(\lambda, T') \leq \frac{B_1}{\lambda}$.

  To bound the noise trader loss $NTL(\lambda, T')$, we will consider each bundle $z^t$ purchased by the noise trader.
  The idea is to bound the difference in price between the purchase and sale of $z^t$.
  For analysis, we suppose that at each $t$, the noise trader first sells any previous bundles (e.g.\ at $t=4$, first selling $z^3$ and then selling $z^2$), and then purchases $z^t$.

  Now let $b(t)$ be the largest power of $2$ that divides $t$.
  Let $q_{\text{buy}}^t$ and $q_{\text{sell}}^t$ be the market state just before the noise trader purchases $z^t$ and just after she sells $z^t$, respectively.
  \begin{claim} \label{claim:traders-between-noise-bundle}
    For each $t$, exactly $b(t)$ traders arrive between the purchase and the sale of bundle $z^t$; furthermore, $q_{\text{sell}}^t - q_{\text{buy}}^t$ is exactly equal to the sum of these participants' trades.
  \end{claim}
  For example, suppose $t$ is odd.
  Then only one participant arrives between the purchase and sale of $z^t$.
  Furthermore, $z^t$ is the last bundle purchased by the noise trader at time $t$ and is the first sold at time $t+1$, so the difference in market state is exactly $z^t$ plus that participant's trade.
  
  \begin{claim} \label{claim:max-gap}
    If the noise trader purchases and later sells $z^t$, then her net loss in expectation over $z^t$ (but for any trader behavior in response to $z^t$), is at most $\lambda b(t) K$ where $K = \E \|z^t\|_2$.
  \end{claim}
    We now sum over all bundles $z^t$ purchased by the noise trader, i.e. at time steps $1,\dots,T'$.
  Recall that the noise trader sells back every bundle $z^t$ she purchases.
  Thus, her total payoff is the sum over $t$ of the difference in price at which she buys $z^t$ and price at which she sells it.
  For each $j=0,\dots,\log T' - 1$, there are $2^j$ different steps $t$ with $b(t) = T'/2^{j+1}$.
  The total loss is thus,
  \begin{align}
   NTL(\lambda, T')
    &\leq \sum_{j=0}^{\log T' - 1} 2^j \frac{T'}{2^{j+1}} \lambda K
    =    \frac{T' \log T'}{2} \lambda K~. \label{eqn:noise-loss}
  \end{align}
  Note that if the noise trader has some noise bundles left over after the final participant, we suppose she immediately sells all remaining bundles back to the market maker in reverse order of purchase.

  Putting eq.~\eqref{eqn:noise-loss} together with the above bound on $WC_0$ gives
  \begin{align}
    WC(\lambda, T') &\leq WC_0(\lambda, T') + T' \log T' \lambda K - T'c 
                     \leq \frac{B_1}{\lambda} + T' \left( K \log T' \lambda - c \right)~, \label{eqn:exact-wcl-fee-bound}
  \end{align}
  which is in turn at most $B_1/\lambda$
  if we choose $\lambda$ and the transaction fee $c$ such that $c \geq K \log T \lambda$.
  In other words, we take $\lambda \leq c / K \log T$.

  Finally, we can bound $K = \E \|z^t\|_2$ from Claim~\ref{claim:max-gap} as follows: for each $t$, the components of the $d$-dimensional vector $z^t$ are each independent $\mathrm{Lap}(b)$ variables with $b = 2 \lceil \log T \rceil / \epsilon$.
  By concavity of $\sqrt{\cdot}$, we have
  \begin{align*}
    K = \E \sqrt{\sum_{i=1}^d z^t(i)^2}
      \leq \sqrt{\sum_i \E z^t(i)^2}
      = \sqrt{d \mathrm{Var}(\mathrm{Lap}(b))}
      = \sqrt{2d b^2}
      = 2 \sqrt{2d} \frac{\lceil \log T \rceil}{\epsilon}~.
  \end{align*}

  Therefore, it suffices to pick
    \[ \lambda \leq \frac{c ~ \epsilon}{2 \sqrt{2d} \lceil \log T \rceil \log T} ~. \]
  For $c = \alpha$, this is in fact accomplished by the private, accurate market choosing $\lambda \leq \lambda^*$ from eq.~{eq:lambda-star}.
\end{proof}

\textbf{Limitations of this result.}~
Unfortunately, Theorem~\ref{thm:private-loss-one-over-lambda} does not completely solve our problem: the other way that privacy impacts the market's loss is by lowering the necessary price sensitivity to $\lambda^* \approx \frac{1}{(\log T)^2}$ as mentioned above, leading to a worst-case loss growing with $T$.
It does not seem possible to address this via a larger transaction fee without giving up incentive to participate: traders participate as long as their expected profit exceeds the fee, and collectively $\Omega(1/\lambda)$ of them can arrive making consistent trades all moving the prices in the same (correct) direction, so the total payout will still be $\Omega(1/\lambda)$.

\section{Constant Budget via Adaptive Market Size}
In this section, we achieve our original goal by constructing an adaptively-growing prediction market in which each stage, if completed, subsidizes the initial portion of the next.

The market design is the following, with each $T^{(k)}$ to be chosen later.
We run the transaction-fee private market above with $T = T^{(1)}$, transaction fee $\alpha$, and price sensitivity $\lambda^{(1)} = \lambda^*(T^{(1)}, \alpha/2, \gamma/2)$ from eq.~\eqref{eq:lambda-star}.
When (and if) $T^{(1)}$ participants have arrived, we create a new market whose initial state is such that its prices match the final (noisy) prices of the previous one.
We set $T^{(2)}$ and price sensitivity $\lambda^{(2)} = \lambda^*(T^{(2)}, \alpha/4, \gamma/4)$ for the new market.
We repeat, halving $\alpha$ and $\gamma$ at each stage and increasing $T$ in a manner to be specified shortly, until no more participants arrive.

\begin{theorem} \label{thm:main}
  For any $\alpha, \gamma, \epsilon$, the adaptive market satisfies $\epsilon$-differential privacy, $2\alpha$-incentive to trade, $(\alpha,\gamma)$-accuracy, and a designer budget bound of
    \[ B \leq B_1 \frac{72\sqrt{2} d}{\alpha ~ \epsilon} \biggl(\ln \frac{4608 B_1 \sqrt{2} d^2}{\gamma \alpha^2 \epsilon} \biggr)^2 ~, \]
  where $B_1$ is the budget bound of the underlying unscaled cost function $C_1$.
\end{theorem}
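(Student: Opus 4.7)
The plan is to establish each of the four guarantees in turn, leaning on the per-stage analysis of Section~\ref{sec:trans-fee} for privacy, incentives, and precision, and doing the real work in the budget bound.

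For privacy, each trader arrives during exactly one stage, and each stage is an independent instance of the transaction-fee market of Section~\ref{sec:trans-fee} with freshly sampled Laplace noise; Theorem~\ref{thm:classic-summary} gives $\epsilon$-DP per stage, and since the stages' noise is independent and a single trader's trade only affects outputs within her own stage, composition across stages is trivial and the combined mechanism remains $\epsilon$-DP. For incentives, within stage $k$ the transaction fee is $\alpha$ and the precision is $\alpha/2^k \leq \alpha$, so the argument of Section~\ref{sec:trans-fee} (a constant fee $\alpha$ against at most $\alpha$-accurate noisy prices rules out profit only inside a ball of radius $2\alpha$ around the current $\hat p$) gives $2\alpha$-incentive. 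For precision, stage $k$ achieves $(\alpha/2^k,\gamma/2^k)$-precision by construction, and a union bound across stages contributes total failure probability at most $\sum_k \gamma/2^k \leq \gamma$ while still guaranteeing $\|\hat p^t - p^t\|_1 \leq \alpha/2 \leq \alpha$ in every stage.

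The budget bound is the main technical step. The key observation is that in stage $k$ the price sensitivity $\lambda^{(k)}$ is calibrated for the stricter precision $\alpha/2^k$ while the transaction fee remains $\alpha$. Re-running the arbitrage calculation from the proof of Lemma~\ref{lemma:fee-wcl} with these parameters shows that the expected per-trade noise-trader cost is of order $\alpha/2^k$, so each participant in stage $k$ leaves a surplus of roughly $\alpha(1-2^{-k})\geq \alpha/2$ after covering the arbitrage induced by her own participation. Consequently, if stage $k$ completes with $T^{(k)}$ participants, its contribution to the designer's expected loss is at most $B_1/\lambda^{(k)} - \Omega(T^{(k)}\alpha)$.

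I would then choose the sequence $T^{(k)}$ (roughly doubling, in the spirit of the online-learning doubling trick) so that this quantity is at most $-B_1/\lambda^{(k+1)}$, meaning that a completed stage $k$ prepays the entire worst-case loss of stage $k+1$. Under such a choice the losses telescope across stages: regardless of which stage $K$ is the last (possibly incomplete) stage, the designer's total expected loss is bounded by $B_1/\lambda^{(1)}$. Substituting the explicit form of $\lambda^*$ from eq.~\eqref{eq:lambda-star} at parameters $(T^{(1)}, \alpha/2, \gamma/2)$ then yields the stated closed-form bound. The hard part will be the bookkeeping of this choice: $\lambda^{(k+1)}$ depends on $T^{(k+1)}$, so the recursion defining the $T^{(k)}$ must be unrolled carefully to ensure $T^{(1)}$ is a constant independent of the realized participant count and that the logarithmic factors contributed by $\lambda^{(1)}$ line up precisely with the $\ln^2(\cdot)$ expression in the theorem.
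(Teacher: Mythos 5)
Your high-level plan matches the paper's: telescope the per-stage budget by showing completed stages turn a profit that covers the worst-case loss of the next stage, with $T^{(k)}$ growing geometrically, and pay for privacy/incentives/precision via the per-stage guarantees of Section~\ref{sec:trans-fee}. Two things are off, though, one substantive and one you already flagged.

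The precision claim as you state it is wrong. You write that precision is ``still guaranteeing $\|\hat p^t - p^t\|_1 \leq \alpha/2 \leq \alpha$ in every stage.'' But each new stage is initialized with the \emph{noisy} ending prices of the previous one, so in stage $k$ the gap between $\hat p^t$ and the underlying $p^t$ (the price you would have reached applying all true trades from the very start) accumulates: stage $1$ contributes up to $\alpha/2$, stage $2$ contributes an additional $\alpha/4$ on top of the inherited $\alpha/2$, and so on. The correct statement is that the error telescopes to $\alpha/2 + \alpha/4 + \cdots < \alpha$, not that it stays at $\alpha/2$. This is exactly why the stage-$k$ precision must be $\alpha/2^k$ rather than some fixed $\alpha'$, so it matters that you state it correctly.

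The budget argument is the right shape, but you have deferred the parts that actually make or break the theorem. Concretely, you need to (a) show that the loss in a \emph{non-completed} stage $k$ is at most some fixed fraction (the paper uses $\frac{\alpha}{16}T^{(k)}$) rather than just $B_1/\lambda^{(k)}$ in the abstract, (b) show that a completed stage's profit is at least $\frac{\alpha}{2}T^{(k)}$, which requires checking that the per-participant surplus $\alpha - K\lambda^{(k)}\log T^{(k)}$ really is $\geq \alpha/2$ for the chosen $\lambda^{(k)}$, and (c) pick the geometric ratio (the paper uses $T^{(k)}=4T^{(k-1)}$, not doubling) and verify that $B_1/\lambda^{(k)}$ grows at most geometrically so the induction goes through. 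To extract the closed-form constant you also need a lemma of the form ``$T \geq 9A(\ln(AD))^2 \implies T \geq A(\ln(TD))^2$'' to unroll the implicit definition of $T^{(1)}$; this step is what produces the specific $\ln^2$ expression in the theorem, and your sketch does not yet provide it.
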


\textbf{Proof idea.}~
We set $T^{(1)} = \Theta\bigl(\frac{B_1 d \ln(B_1 d / \gamma \alpha \epsilon)^2}{\alpha^2 ~ \epsilon}\bigr)$, and $T^{(k)} = 4T^{(k-1)}$ thereafter.
\raft{I feel I must inform you that semicolons typically separate two full (related) sentences; see \url{https://www.grammarly.com/blog/semicolon/} :-P}
\bot{Noted but not internalized.}
The key will be the following observation.
The total ``informational'' profit available to the traders (by correcting the initial market prices) is bounded by $O(1/\lambda)$, so if each trader expects to profit more than the transaction fee $c$, then only $O(1/\lambda c)$ traders can all arrive and simultaneously profit.
\raft{I got confused by the ``optimal prediction'' here, thinking about LMSR where the optimal prediction is unbounded.  But then I realized later it must be bounded here since the transaction fee forces the ``effective price space'' to be bounded away from the boundary of the price space, and thus the possible share vectors landing in that set is bounded.  Maybe worth mumbling something to this effect, though must less verbosely :-]}
\bot{Addressed}
Indeed, if all $T$ participants arrive, then the total profit from transaction fees is $\Theta(T)$ while the worst-case loss from the market is $O\left( (\log T)^2\right)$.

We can leverage this observation to achieve a bounded worst-case loss with an ``adaptive-liquidity'' approach, similar in spirit to~\citet{abernethy2014general} but more technically similar to the doubling trick in online learning.
Begin by setting $\lambda^{(1)}$ on the order of $1/(\log T^{(1)})^2 = \Theta(1)$, and run a private market for $T^{(1)}$ participants.
If fewer than $T^{(1)}$ participants show up, the worst-case loss is order $1/\lambda^{(1)}$, a constant.
If all $T^{(1)}$ participants arrive, then (for the right choice of constants) the market has actually turned a profit $\Omega(T^{(1)})$ from the transaction fees.
Now set up a private market for $T^{(2)} = 4T^{(1)}$ traders with $\lambda^{(2)}$ on the order of $1/(\log T^{(2)})^2$.
If fewer than $T^{(2)}$ participants arrive, the worst-case loss is order $1/\lambda^{(2)}$.
However, we will have chosen $T^{(2)}$ such that this loss is smaller than the $\Omega(T^{(1)})$ profit from the previous market.
Hence, the total worst-case loss remains bounded by a constant.

If all $T^{(2)}$ participants arrive, then again this market has turned a profit, which can be used to completely offset the worst-case loss of the next market, and so on.
Some complications arise, as to achieve $(\alpha,\gamma)$-precision, we must set $\alpha^{(1)},\gamma^{(1)},\alpha^{(2)},\gamma^{(2)},\dots$ as a convergent series summing to $\alpha$ and $\gamma$; and we must show that all of these scalings are possible in such a way that the transaction fees cover the cost of the next iteration.
(An interesting direction for future work would be to replace the iterative approach here with the continuous liquidity adaptation of~\cite{abernethy2014general}.)

More specifically, we prove that the loss in any round $k$ that is not completed (not all participants arrive) is at most $\frac{\alpha}{16}T^{(k)}$;
moreover, the profit in any round $k$ that is completed is at least $\frac{\alpha}{2}T^{(k)}$.
Of course, only one round is not completed: the final round $k$.
If $k = 1$, then the financial loss is bounded by $\frac{1}{\lambda^{(1)}}$, a constant depending only on $\alpha, \gamma, \epsilon$.
Otherwise, the total loss is the sum of the losses across rounds, but the mechanism makes a profit in every round but $k$.
Moreover, the loss in round $k$ is at most $\frac{\alpha}{2}T^{(k)}  = \frac{\alpha}{8}T^{(k-1)}$, which is at most half of the profit in round $k-1$.
So if $k \geq 2$, the mechanism actually turns a net profit.

While this result may seem paradoxical, note that the basic phenomenon appears in a classical (non-private) prediction market with a transaction fee, although to our knowledge this observation has not yet appeared in the literature.
Specifically, a classical prediction market with budget bound $B_1$, trades of size $1$, and a small transaction fee $\alpha$, will still have an $\alpha$-incentive to participate, and the worst case loss will still be $\Theta(B_1)$; this loss, however, can be extracted by as few as $\Theta(1)$ participants.
Any additional participants must be in a sense disagreeing about the correct prices; their transaction fees go toward market maker profit, but they do not contribute further to worst-case loss.

\section{Kernels, Buying Data, Online Learning} \label{sec:kernels}
While preserving privacy in prediction markets is well-motivated in the classical prediction market setting, it is arguably even more important in a setting where machine-learning hypotheses are learned from private personal data.
Waggoner et al.~\cite{waggoner2015market} develop mechanisms for such a setting based on prediction markets, and further show how to preserve differential privacy of the participants.
Yet their mechanisms are not practical in the sense that the financial loss of the mechanism could grow without bound.
In this section, we sketch how our bounded-financial-loss market can also be extended to this setting.
This yields a mechanism for purchasing data for machine learning that satisfies $\epsilon$-differential privacy, $\alpha$-precision and incentive to participate, and bounded designer budget.

To develop a mechanism which could be said to ``purchase data'' from participants, Waggoner et al.~\cite{waggoner2015market} extend the classical setting in two ways.
The first is to make the market \emph{conditional}, where we let $\Z = \X \times \Y$, and have independent markets $C_x:\reals^d\to\reals$ for each $x$.
Trades in each market take the form $q_x \in \reals^d$, which pay out $q_x\cdot \phi(y)$ upon outcome $(x',y)$ if $x=x'$, and zero if $x\neq x'$.
Importantly, upon outcome $(x,y)$, only the costs associated to trades in the $C_x$ market are tallied.

The second is to change the bidding language using a \emph{kernel}, a positive semidefinite function $k: \Z \times \Z \to \reals$.
Here we think of contracts as functions $f:\Z\to\reals$ in the reproducing kernel Hilbert space (RKHS) $\F$ given by $k$, with basis $\{f_{\z}(\cdot) = k(\z, \cdot) ~ : ~ \z \in \Z\}$.
For example, we recover the conditional market setting with independent markets with the kernel $k((x,y),(x',y')) = \ones\{x=x'\}\phi(y)\cdot\phi(y')$.
The RKHS structure is natural here because a basis contract $f_{\z}$ pays off at each $\z'$ according to the ``covariance'' structure of the kernel, i.e. the payoff of contract $f_{\z}$ when $\z'$ occurs equals $f_{\z}(\z') = k(\z,\z')$.
For example, when $\Y=\{-1,1\}$ one recovers radial basis classification using $k((x,y),(x',y')) = yy'e^{-(x-x')^2}$.

These two modifications to classical prediction markets, given as Mechanism 2 in~\cite{waggoner2015market}, have clear advantages as a mechanism to ``buy data''.
One may imagine that each agent, arriving at time $t \in \{1,\dots,T\}$, holds a data point $(x^t,y^t) \in \Z = \X \times \Y$.
A natural purchase for this agent would be a basis contract $f_{(x^t,y^t)}$, as this corresponds to a payoff that is highest when the test data point actually equals $(x^t,y^t)$ and decreases with distance as measured by the kernel structure.

The importance of privacy now becomes even more apparent, as the data point $(x^t,y^t)$ could be information sensitive to trader $t$. 
Fortunately, we can extend our main results to this setting.
To demonstrate the idea, we give a sketch of the result and proof below.

\begin{theorem}[Informal] \label{thm:kernel-conditional}
  Let $\Z = \X\times\Y$ where $\X$ is a compact subset of a finite-dimensional real vector space and $\Y$ is finite, and let positive semidefinite kernel $k:\Z\times\Z\to\reals$ be given.
  \raft{Actually $\Y$ does not need to be finite; we just need $C_x:\reals^d\to\reals$ for some $d$... maybe having a bounded $\phi:\Y\to\reals^d$ and letting $k((x,y),(x',y')) = k(x,x') \phi(y)\cdot\phi(y')$ is better.}
  For any choices of accuracy parameters $\alpha, \gamma$, privacy parameters $\epsilon, \delta$, trade size $\Delta$, and query limit $Q$, the kernel adaptive market satisfies $(\epsilon,\delta)$-differential privacy, $(\alpha,\gamma)$-precision, $2\alpha$-incentive to participate, and a bounded designer budget.
\end{theorem}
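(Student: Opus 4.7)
The plan is to lift the adaptive mechanism of Theorem~\ref{thm:main} into the RKHS $\F$ determined by the kernel $k$. A market state is now an element $q \in \F$; trades are functions $dq^t \in \F$ with $\|dq^t\|_\F \le \Delta$; and the cost function $C:\F\to\reals$ is a perspective-scaled version of a base convex $C_1$ whose RKHS-norm price sensitivity equals $1$. Because payoffs satisfy $q(z) = \langle q, f_z\rangle_\F$ for basis elements $f_z = k(z,\cdot)$, Cauchy--Schwarz converts an RKHS-norm error in $\hat q^t - q^t$ into an $\ell_\infty$ price error, giving the bridge needed for $(\alpha,\gamma)$-precision. The conditional structure collapses cleanly: since only one outcome $(x,y)$ is realized, only the $C_x$ contribution is paid out, so the designer's worst-case loss is dominated by the single-cost analysis on $\F$.

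For privacy, I would replace the coordinate-wise Laplace noise of~\cite{waggoner2015market} by a Gaussian process on $\F$ with trace-class covariance calibrated to $\Delta,\epsilon,\delta$, and $\log Q$, reusing the tree-based continual-observation schedule. Standard Gaussian-mechanism composition under $Q$ adaptive queries then yields $(\epsilon,\delta)$-differential privacy and bounds $\E\|\hat q^t - q^t\|_\F$ up to the same polylogarithmic factors as in Theorem~\ref{thm:classic-summary}, with an additional $\sqrt{\log(1/\delta)}$.

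The transaction-fee analysis of Lemma~\ref{lemma:fee-wcl} transfers in its entirety, since it uses only convexity of $C$, the price sensitivity $\lambda$, and a bound on $K = \E\|z^t\|_\F$. The last quantity is $O\bigl(\sqrt{\operatorname{tr}(\Sigma)}\bigr)$ for our Gaussian-process covariance $\Sigma$, giving a per-stage budget of $B_1/\lambda$ for a suitably shrunken $\lambda^*$ that now also depends on $\Delta$ and $Q$. The adaptive doubling of Theorem~\ref{thm:main} then goes through verbatim with $T^{(k)} = 4T^{(k-1)}$, $\alpha^{(k)} = \alpha/2^k$, and $\gamma^{(k)} = \gamma/2^k$: a union bound gives $(\alpha,\gamma)$-precision, the transaction-fee surplus from a completed stage $k-1$ strictly exceeds the worst-case loss of stage $k$, and $2\alpha$-incentive is inherited from the per-stage mechanism as long as trades of $\F$-norm at most $\Delta$ can realize the profitable directions.

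The main technical obstacle is the $\F$-geometry of the noise: one must choose a Gaussian process whose covariance simultaneously (i) provides $(\epsilon,\delta)$-DP against all RKHS queries of $\F$-norm $\Delta$, (ii) has finite trace so that $K$ is bounded and the per-stage noise-trader loss is summable, and (iii) is compatible with the kernel's feature space so that the Bregman-divergence and incentive calculations remain meaningful. A secondary complication is verifying that the profitable corrective contract appearing in the incentive statement has $\F$-norm at most $\Delta$; since transaction fees already confine the profitable region to a price-distance ball of radius $2\alpha$, this reduces to an RKHS bound on $f_z - f_{z'}$ with $k(z,z), k(z',z') \le \kappa$, which is where compactness of $\X$ and boundedness of the kernel enter the hypothesis.
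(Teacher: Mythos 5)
Your overall plan runs parallel to the paper's: lift the conditional/kernel structure, replace Laplace with Gaussian noise, carry over Claims~\ref{claim:traders-between-noise-bundle} and~\ref{claim:max-gap}, and recycle Theorem~\ref{thm:main} as a black box on top of the per-stage budget bound. The union-bound precision argument and the observation that the incentive proof is about Bregman-divergence improvement are the same as in the paper. So far this is faithful.

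The genuine divergence is in how you bound the noise, and it is not a cosmetic one. You set $K = \E\|z^t\|_\F$ and ask for a \emph{trace-class} covariance so that this RKHS norm is finite, and you correctly flag the resulting tension: the very covariance that makes samples live in $\F$ with small norm cannot simultaneously mask all RKHS perturbations of norm $\Delta$ to $(\epsilon,\delta)$-DP, because its eigenvalues decay to zero in some directions of $\F$. The paper sidesteps that tension entirely. It never asks the noise to be small in $\|\cdot\|_\F$; it only ever needs the noise \emph{evaluated at the realized $x$}, because upon outcome $(x,y)$ only the $C_x$ market pays out. Accordingly, Claim~\ref{claim:max-gap} is applied pointwise to $z^t(x) \in \reals^d$, and the constant that replaces $K$ is
\[
K = \E\Bigl[\sup_{x\in\X} \|z^t(x)\|_2\Bigr],
\]
the expected supremum of a Gaussian process over the compact index set $\X$. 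Finiteness of this quantity is a much weaker demand than trace-class covariance---it is Dudley/Talagrand chaining on a continuous kernel over a compact domain---and it is perfectly compatible with the non-trace-class covariance that the functional Gaussian mechanism actually needs for $(\epsilon,\delta)$-DP against $\F$-norm-$\Delta$ queries. In short: your approach is not just looser, it runs into the obstacle you identify as ``(i) vs.~(ii)'', whereas the paper's sup-norm accounting dissolves it. You should replace the $\F$-norm budget bookkeeping by the pointwise/sup-over-$x$ version; once you do, the noise-trader loss becomes $\lambda \sum_t \alpha_t K$ with the Talagrand-bounded $K$, and the rest of your doubling argument goes through as you describe.

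One smaller point: your secondary worry about whether the profitable corrective trade has $\F$-norm at most $\Delta$ is reasonable, but the paper handles incentives exactly as in the non-kernel case---profit equals the reduction in an expected Bregman divergence, which exceeds the fee unless prices are already within the $2\alpha$ ball---so no separate RKHS-geometric argument about $f_z - f_{z'}$ is needed in their proof sketch.
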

\begin{proof}[Proof Sketch]
  The precision property, i.e. that prices are approximately accurate despite privacy-preserving noise, follows from \cite[Theorem 2]{waggoner2015market}, and the technique in Theorem~\ref{thm:main} to combine the accuracy and privacy of multiple epochs.
  The incentive to trade property is essentially unchanged, as a participants' profit is still the improvement in expected Bregman divergence, which exceeds the transaction fee unless prices are already accurate.
  It thus remains only to show a bounded designer budget, which is slightly more involved.
  Briefly, Claim~\ref{claim:traders-between-noise-bundle} goes through unchanged, and Claim~\ref{claim:max-gap} holds as written where now $C$ becomes $C_x$ and $z^t$ becomes $z^t(x) = f^t(x,\cdot)$, i.e.,\ the trade at time $t$ restricted to the $C_x$ market alone.

  The remainder of Lemma~\ref{lemma:fee-wcl} now proceeds with one modification regarding the constant $K$.
  In eq.~\eqref{eqn:noise-loss}, the expression for the noise trader loss becomes $NTL(\lambda,T') = \E\bigl[\sup_{x\in\X} \sum_{t=1}^{T'} \lambda\alpha_t \|z^t(x)\|_2\bigr]$, where the $\alpha_t$ are simply coefficients to keep track of how many trades occurred between the buy and sell of noice trade $t$.
  We can proceed as follows:
  \begin{align*}
    NTL(\lambda,T')
    \leq \E\left[\sup_{x_1,\ldots,x_{T'}\in\X} \sum_{t=1}^{T'} \lambda\alpha_t \|z^t(x_t)\|_2\right]
    = \lambda\sum_{t=1}^{T'} \alpha_t \E\left[\sup_{x\in\X}\|z^t(x)\|_2\right]
    = \lambda\sum_{t=1}^{T'} \alpha_t K~,
  \end{align*}
  where $K$ is simply the constant $\E\left[\sup_{x\in\X}\|z^t(x)\|_2\right]$ where the expectation is taken over the Gaussian process generating the noise.
  \raft{There are some details here which I'm glossing over since I got too tired; long story short, I don't have a definitive reference, but I'm positive that we can restrict to reasonable classes of kernels where this will be the case.}
  It is well-known that the expected maximum of a Gaussian process is bounded~\cite{talagrand2014upper}, and thus boundedness of $K$ follows from the fact that $\Y$ is finite.
  Thus, continuing from eq.~\eqref{eqn:noise-loss} we obtain $NTL(\lambda,T') \leq \frac{T' \log T'}{2} \lambda K$ as before, with this new $K$.
  \raft{Once we have the details down, we can get the dependence on the number of securities $d$, which should be $\sqrt{d}$ but I'm not sure.}
  Finally, the proof of Theorem~\ref{thm:main} now goes through, as it only treats the mechanism from Theorem~\ref{thm:private-loss-one-over-lambda} as a black box.
\end{proof}

We close by noting the similarity between the kernel adaptive market mechanism and traditional learning algorithms, as alluded to in the introduction.
As observed by Abernethy, et al.~\cite{abernethy2013efficient}, the market price update rule for classical prediction markets resembles Follow-the-Regularized-Leader (FTRL);
specifically, the price update at time $t$ is given by $p^t = \nabla C(q^t) = \argmax_{w\in\Delta(\Y)} \langle w , \sum_{s\leq t} dq^s \rangle - R(w)$, where $dq^s$ is the trade at time $s$, and $R=C^*$ is the convex conjugate of $C$.

In our RKHS setting, we can see the same relationship.
For concreteness, let $C_x(q) = \tfrac 1 \lambda C(\lambda q)$ for all $x\in\X$, and let $R:\Delta(\Y)\to\reals$ be the conjugate of $C$.
Suppose further that each agent $t$ purchases a basis contract $df^t = f_{x^t,y^t}$, where we take a classification kernel $k'((x,y),(x',y')) = k(x,x') \ones\{y=y'\}$.
Letting $dq^t(x) = df^t(x,\cdot) \in \reals^\Y$, the market price at time $t$ is given by,
\begin{align*}
  p^t_x
  &= \argmax_{w\in\Delta(\Y)} \biggl\langle w , \sum_{s\leq t} dq^s(x) \biggr\rangle - \frac 1 \lambda R(w)
  \\
  & = \argmax_{w\in\Delta(\Y)} \biggl\langle w , \sum_{s\leq t} k((x^s,y^s),(x,\cdot)) \biggr\rangle - \frac 1 \lambda R(w)
  \\
  & = \argmax_{w\in\Delta(\Y)} \biggl\langle w , \sum_{s\leq t} k(x^s,x) \ones_{y^s} \biggr\rangle - \frac 1 \lambda R(w)~,
\end{align*}
where $\ones_y$ is an indicator vector.
Thus, the market price update follows a natural kernel-weighted FTRL algorithm, where the learning rate $\lambda$ is the price sensitivity of the market.

\section{Summary and Future Directions}
Motivated by the problem of purchasing data, we gave the first bounded-budget prediction market mechanism that achieves privacy, incentive alignment, and precision (low impact of privacy-preserving noise the predictions).
To achieve bounded budget, we first introduced and analyzed a transaction fee, achieving a slowly-growing $O((\log T)^2)$ budget bound, thus eliminating the arbitrage opportunities underlying previous impossibility results.
Then, observing that this budget still grows in the number of participants $T$, we further extended these ideas to design an adaptively-growing market, which does achieve bounded budget along with privacy, incentive, and precision guarantees.

We see several exciting directions for future work.
An extension of Theorem~\ref{thm:kernel-conditional} where $\Y$ need not be finite should be possible via a suitable generalization of Claim~\ref{claim:max-gap}.
Another important direction is to establish privacy for \emph{parameterized} settings as introduced by Waggoner, et al.~\cite{waggoner2015market}, where instead of kernels, market participants update the (finite-dimensional) parameters directly as in linear regression.
Finally, we would like a deeper understanding of the learning--market connection in nonparametric kernel settings, which could lead to practical improvements for design and deployment.

\bibliographystyle{plainnat}
{\small 
}

\vfill
\break

\appendix

\section{Private (Unbounded-Loss) Markets} \label{sec:classic-privacy}
In this section, we review the private prediction market construction of \citet{waggoner2015market}.
We include proofs for completeness and clarity, as we focus on classic, complete cost-function based markets here whereas that paper focused on ``kernel markets'' which required additional formalism.
Our adaptive market will rely on this market construction and results.

\paragraph{Approach and notation.}
In a private market, the designer chooses an initial ``true'' market state $q^0$ (for convenience, we will assume $q^0 = 0$) and announces a ``published'' market state $\hat{q}^0 = 0$.
When participant $t=1,\dots,T$ arrives and requests trade $dq^t$, the market maker updates the true market state $q^t = q^{t-1} + dq^t$, but does not reveal the true state to anyone.
Instead, the market maker announces the published market state $\hat{q}^t$, which is some randomized function of all trades and published market states so far.
We assume that $\|dq^t\| \leq 1$ according to some norm $\|\cdot\|$, that is, each participant can buy or sell at most one ``total'' share.\footnote{One could also modify our approach to allow arbitrarily large trades, but this would also require adding proportionally large noise in order to continue to preserve privacy.}
Let $\|\cdot\|_{*}$ denote the dual norm to $\| \cdot \|$.

\paragraph{Differential privacy.}
The market mechanism can be viewed as a randomized function $M$ that takes as input a list of trades $\vec{dq} = dq^1,\dots,dq^T$ and outputs a list of published market states $\hat{q}^0,\dots,\hat{q}^T$.
We will call it \emph{$(\epsilon,\delta)$-differentially private} if changing a single participant's trade does not change the distribution on outputs much: if for all $\vec{dq}$ and $\vec{dq'}$ differing only in one entry, and for all (measurable) sets of possible outputs $S$,
  \[ \Pr\left[M\left(\vec{dq}\right) \in S\right] \leq e^{\epsilon} \Pr\left[M\left(\vec{dq}'\right) \in S\right] ~ + ~ \delta . \]
The  mechanism is $\epsilon$-differential private if it is $(\epsilon,0)$-d.p.
It is reasonable to treat $\epsilon$ as a constant whose size controls the privacy guarantee, such as $\epsilon = 0.01$.
Meanwhile, $\delta$ is normally preferred to be vanishingly small or $0$, as a mechanism can leak the private information of all individuals with $\delta$ probability and still be $(\epsilon,\delta)$-differentially private.

To be careful, we note that the market's ``full output'' also includes that it sends each participant their payoff.
However, this payoff is a function only of the public noisy market states and of that participant's trade.
The payoff is assumed to be sent privately and separately, unobservable by any other party.
By the post-processing property of differential privacy, a trader's $(\epsilon,\delta)$-privacy guarantee continues to hold regardless of how the published market states are combined with any side information, even including the full list of all other participant's trades.
(This can be formalized using the notion of \emph{joint differential privacy}, but for simplicity we will not do so.)

\paragraph{Tool 1: generalized Laplace noise.}
Imagine that the market could first collect all $T$ trades simultaneously, then sum them and publish some $\hat{q}^T$, a noisy version of the market state $q^T = \sum_{t=1}^T dq^t$.

In this scenario, there is only one output $\hat{q}^T$ instead of a whole list of outputs $\hat{q}^1,\dots$.
The standard, simplest solution to protecting privacy would be to take the true sum $q^T$ and add noise from a generalization of the Laplace distribution.
The real-valued $Lap(b)$ random variable has probability density $x \mapsto \frac{1}{2b} e^{-|x| / b}$.
In $\reals^d$, given a norm $\|\cdot\|$, we define the generalized $Lap_d(b)$ distribution to have probability density proportional to $e^{\|x\| / b}$.
In this case, releasing $\hat{q} = q + Lap_d(1/\epsilon)$ is $\epsilon$-differentially private: Given $q,q'$ with $\|q-q'\| \leq 1$, the ratio of probability densities at any $\hat{q}$ is $e^{\epsilon \|q-\hat{q}\|} / e^{\epsilon \|q' - \hat{q}\|} \leq e^{\epsilon \|q - q'\|} \leq e^{\epsilon}$.
(When the norm is $L^1$, this corresponds to independent scalar $Lap(1/\epsilon)$ noise on each of the $d$ coordinates.)
Note that it also satisfies a good accuracy guarantee, as the amount of noise required does not scale with $T$; so with enough participants, this mechanism becomes a very accurate indication of the ``average'' trade while still preserving privacy.

\paragraph{Tool 2: continual observation technique.}
Unfortunately, the above solution is not sufficient because our market must publish a market state at each time step.
One naive approach is to apply the above solution independently at each time step, \emph{i.e.} produce each $\hat{q}^t = q^t + z^t$ where $z^t$ contains independent Laplace noise.
The problem is that each step reveals more information about a trade, for instance, $dq^1$ participates in $T$ separate publications.
To continue preserving privacy, each $z^t$ must have a much larger variance, which makes the published market states very inaccurate.

A second naive approach is to add noise to each $dq^t$ just once, producing $\hat{dq}^t = dq^t + z^t$.
Then set $\hat{q}^t = \sum_{s=1}^t \hat{dq}^t$.
The benefit to this approach is that it can re-use the noisy $z^t$ variables across time steps, rather than re-drawing new noise each time.
The problem is that, while each $z^t$ is small in magnitude, there are many of them; for example, the final $\hat{q}^T$ contains $T$ pieces of noise, which add up to a very inaccurate estimate of the true market state.
This contrasts with the first naive approach, in which each publication only includes one piece of noise, but that piece of noise is very large.

The idea of the ``continual observation'' technique, pioneered by \citet{dwork2010differential} and \citet{chan2011private}, is to strike a balance between these extremes by re-using noise a limited number of times while also keeping each piece of noise small.
Roughly, each publication $\hat{q}^t$ will include a logarithmic (in $t$) number of pieces of noise, each of which is only ``logarithmically large''.

\begin{definition}
We define the \emph{private market mechanism} for observation $Z \in \Z$ and securities $\phi: Z \to \reals^d$ using a cost function $C$ with parameter $\lambda$.
At each time $t$ participant $t$ arrives and proposes trade $dq^t$ with $\|dq^t\| \leq 1$, unobservable to all others.
At most $T$ participants may arrive.
Let $q^t = \sum_{j=1}^t dq^j$.
Let $z^t = 0$ and for all $t \geq 1$, let $z^t \sim Lap_d(2\lceil \log T\rceil /\epsilon)$.
At each time $t$, the mechanism publishes market state
  \[ \hat{q}^t := q^t + z^t + z^{s(t)} + z^{s(s(t))} + \cdots + z^0 , \]
where $s(t)$ is defined by writing the integer $t$ in binary, then flipping the rightmost ``one'' bit to zero.
Participant $t$ is charged $C(\hat{q}^t + dq^t) - C(\hat(q)^t)$, unobservable to all others.
When outcome $Z$ is observed, she is paid $dq^t \cdot \phi(Z)$, unobservable to all others.
\end{definition}

We note that $s(0) = 0$ and $s(t) < t$ for all $t > 0$.
A convenient notation is to let
  \[ \hat{q}^{s(t):t} := \left(\sum_{j=s(t)+1}^t dq^j\right) ~ + ~ z^t . \]
Then we can define the mechanism recursively as
\begin{align*}
  \hat{q}^t &= q^t + z^t + z^{s(t)} + z^{s(s(t))} + \cdots + z^0  \\
  &= \hat{q}^{s(t):t} + \hat{q}^{s(t)} .
\end{align*}

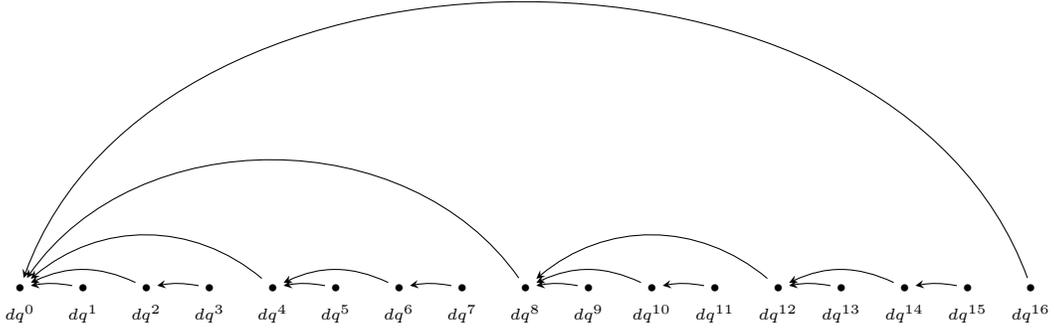
\begin{figure}\tiny
  \centering
  \begin{tikzpicture}[scale=0.7]
    \node (n0) at (0,0) {$\bullet$};
    \node at (0,-0.5) {$dq^0$};
    \foreach \j [count=\i] in {0,0,2,0,4,4,6,0,8,8,10,8,12,12,14,0}
    {
      \node(n\i) at (1.2*\i,0) {$\bullet$};
      \node at (1.2*\i,-0.5) {$dq^{\i}$};
      \pgfmathsetmacro{\angle}{10 + 15 * log2(\i-\j)}
      \draw (n\i) edge[->,>=stealth,in=\angle,out=180-\angle] (n\j);
    }
  \end{tikzpicture}
  \caption{\small  Picturing the continual observation technique for preserving privacy~\citep{dwork2010differential,chan2011private}.
   Each $dq^t$ is a trade.
   The true market state at $t$ is $q^t = \sum_{j=1}^t dq^j$ and the goal is to release a noisy version $\hat{q}^t$
   Each arrow originates at $t$, points backwards to $s(t)$, and is labeled with independent Laplace noise vector $z^t$.
   Now $\hat{q}^t = q^t + z^t + z^{s(t)} + z^{s(s(t))} + \cdots$.
   In other words, the noise added at $t$ is a sum of noises obtained by following the arrows all the way back to $0$.
   There are two key properties: Each $t$ has only $\log T$ arrows passing above it, and each path backwards takes only $\log T$ jumps.}
  \label{fig:continual-tree}
\end{figure}

\paragraph{Remark.}
Notice that $\lambda$ has no impact on the construction of the market, in particular does not affect the amount of noise to add.
Intuitively, this is because the market is defined entirely in ``share space'', while price sensitivity relates shares to prices.
We will not need to discuss $\lambda$ until we discuss accuracy of the prices, which is irrelevant to the proof of privacy.

\begin{theorem}[Privacy] \label{thm:classic-private}
  Assuming that all trades satisfy $\| dq^t \| \leq 1$ (under the same norm as used for the generalized Laplace distribution), the private mechanism is $\epsilon$-differentially private in the trades $dq^1,\dots,dq^T$ with respect to the output $\hat{q}^1,\dots,\hat{q}^T$.
\end{theorem}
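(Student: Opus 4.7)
The plan is to reduce the mechanism's output to a sequence of independent noisy partial sums, one per ``block'' in the dyadic decomposition encoded by the map $s$, and then apply standard Laplace-mechanism composition. Concretely, for each $\tau \in \{1,\dots,T\}$ define the block quantity
\[ Y^\tau := \hat{q}^{s(\tau):\tau} = \sum_{j=s(\tau)+1}^\tau dq^j + z^\tau ~. \]
By the recursive identity $\hat{q}^t = Y^t + \hat{q}^{s(t)}$ (unwinding to $\hat{q}^0=0$), each published state $\hat{q}^t$ is a deterministic function of $(Y^1,\dots,Y^T)$. Hence by the post-processing property of differential privacy, it suffices to show that the joint release $(Y^1,\dots,Y^T)$ is $\epsilon$-differentially private in the trades.

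Next I would exploit the fact that the $z^\tau$ are mutually independent, so the $Y^\tau$ are conditionally independent given the trade vector, and privacy losses compose additively across $\tau$. Fix neighboring trade sequences $\vec{dq},\vec{dq}'$ differing only in coordinate $t^\star$. Since each trade has norm at most $1$, the sensitivity of the (noise-free) partial sum $\sum_{j=s(\tau)+1}^\tau dq^j$ is $0$ if $t^\star \notin (s(\tau),\tau]$ and at most $2$ otherwise. For a generalized Laplace density $\propto e^{-\|x\|/b}$, the triangle inequality gives a pointwise density ratio bounded by $e^{2/b}$ per affected $\tau$, exactly as in the standard Laplace mechanism derivation sketched in ``Tool 1'' above.

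The crux is then the combinatorial fact that for any fixed $t^\star \in \{1,\dots,T\}$, the number of $\tau \in \{1,\dots,T\}$ with $s(\tau) < t^\star \leq \tau$ is at most $\lceil \log T \rceil$. This is the ``only $\log T$ arrows pass above each $t$'' claim advertised in Figure~\ref{fig:continual-tree}. I would prove it by induction on $\lceil \log T \rceil$: split $\{1,\dots,T\}$ at the midpoint $2^{\lceil \log T \rceil - 1}$; any covering $\tau$ either lies strictly on $t^\star$'s side of the split (contributing at most $\lceil \log T \rceil - 1$ by induction on a half-size problem), or lies on the opposite side, in which case the definition of $s$ (flip the rightmost one-bit) forces $\tau$ to be a single specific value whose interval spans the midpoint. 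Summing the per-block bound over the at most $\lceil \log T \rceil$ affected blocks yields total privacy loss
\[ \frac{2\lceil \log T\rceil}{b} = \frac{2\lceil \log T\rceil}{2\lceil \log T\rceil/\epsilon} = \epsilon, \]
completing the proof.

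I expect the only real subtlety to be the combinatorial lemma in the previous paragraph: the Laplace composition argument is standard, and the post-processing reduction is essentially notational. One minor pitfall to watch is that the definition allows $s(t)=0$ with $z^0 = 0$, so the ``block'' $Y^t$ for $t$ odd is a single-trade noisy increment, and the recursion bottoms out correctly at $\hat{q}^0 = 0$; this just needs to be stated cleanly so that the independence claim really does cover every $\tau$ in $\{1,\dots,T\}$.
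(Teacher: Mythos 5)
Your proposal matches the paper's proof essentially step for step: reduce via post-processing to the joint release of the noisy dyadic partial sums $\hat{q}^{s(\tau):\tau}$, bound each block's privacy loss by $\epsilon/\lceil \log T\rceil$ using sensitivity $2$ against $Lap_d(2\lceil \log T\rceil/\epsilon)$ noise, observe each trade appears in at most $\lceil\log T\rceil$ blocks, and compose. The only divergence is cosmetic: for the covering lemma, you sketch a midpoint-splitting induction, whereas the paper argues directly from the binary expansion of $t$ (for each bit position $j$, at most one $\tau$ with $s(\tau)<t'\leq\tau$ has its lowest set bit at $j$), but the two arguments are interchangeable.
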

\begin{proof}
We emphasize that this proof does not contain new ideas beyond the original continual observation technique, but merely adapts them to this setting.

We will imagine that the market publishes every partial sum it uses, \emph{i.e.} $\hat{q}^{s(t):t}$ for all time steps $t$.
The actual published values $\hat{q}^t$ are functions of these outputs, so by the \emph{post-processing} property of differential privacy~\citep{dwork2014algorithmic}, it suffices to show that publishing each of these partial sums would be $\epsilon$-differentially private.

The idea is to treat each publication of the form $\hat{q}^{s(t):t}$ as a separate mechanism, which has (claim 1) a guarantee of $(\epsilon/\lceil \log T \rceil)$-differential privacy.
We then show (claim 2) that any one trade $dq^t$ participates in at most $\lceil \log T \rceil$ of these mechanisms.
These two claims imply the result because, by the composition property of differential privacy~\citep{dwork2014algorithmic}, each trade is therefore guaranteed $\lceil \log T\rceil \cdot (\epsilon / \lceil \log T \rceil) = \epsilon$-differential privacy.

First, we claim that each publication $\hat{q}^{s(t):t}$ preserves $\epsilon / \lceil \log T \rceil$ differential privacy of each trade $dq^{t'}$ that participates, i.e. with $s(t) < t' \leq t$.
This follows from the definition of $\hat{q}^{s(t):t}$ because, since $\|dq^{t'}\| \leq 1$, an arbitrary change in $dq^{t'}$ changes norm of the partial sum of trades by at most $2$, and $z^t$ is a draw from the generalized $Lap_d(2\lceil \log T \rceil/\epsilon)$ distribution with respect to the same norm.

Second, we claim that each trade $dq^{t'}$ participates in at most $\lceil \log T \rceil$ different partial sums $\hat{q}^{s(t):t}$.
To show this, we only need to count the time steps $t$ where $s(t) < t' \leq t$, in other words, integers $t \geq t'$ where zeroing the rightmost ``one'' bit gives a number less than $t'$.

Without loss of generality, the the binary expansion of $t$ is $b_m b_{m-1} \ldots b_j 1 0 \ldots 0$ for some $m,j$ and then $s(t)$ has expansion $b_m b_{m-1} \ldots b_j 0 0 \ldots 0$.
Hence the condition $s(t) < t' \leq t$ implies that the binary expansion of $t$ matches that of $t'$ from bits $m$ to $j$, then has a one at bit $j-1$, and has zeroes at all lower-order bits.
Since $m$ is fixed for $t'$, this can only happen once for each $j$, or at most $m$ total times; and $m \leq \lceil \log T \rceil$ because $t' \leq T$.
\end{proof}

\begin{lemma}[Accuracy of share vector] \label{lem:classic-private-sharevec-accurate}
  In the private mechanism with $L^1$ norm, $d$ securities, and $T$ time steps, we have with probability $1-\gamma$,
  \[ \max_{t} \|q^t - \hat{q}^t\|_1 \leq \frac{4\sqrt{2} d \log \lceil T \rceil}{\epsilon} \ln\left(\frac{2 T d}{\gamma}\right) . \]
\end{lemma}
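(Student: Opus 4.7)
The plan is to directly bound the $L^1$ distance $\|q^t - \hat{q}^t\|_1$ using the explicit form of the noise added in the mechanism, together with a concentration inequality for sums of independent Laplace random variables and a union bound over time steps and coordinates.

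First I would unwind the recursive definition of $\hat{q}^t$ to observe that $\hat{q}^t - q^t = z^t + z^{s(t)} + z^{s(s(t))} + \cdots + z^0$, a sum of noise vectors indexed by the path $t \to s(t) \to s(s(t)) \to \cdots \to 0$. As was already shown in the proof of Theorem~\ref{thm:classic-private} (second claim, counting bits), this path has length at most $\lceil \log T \rceil$. Since each $z^{s^{(i)}(t)}$ is drawn independently with $L^1$-coordinates that are i.i.d.\ $\mathrm{Lap}(b)$ for $b = 2\lceil\log T\rceil / \epsilon$, each coordinate of $\hat{q}^t - q^t$ is (for every fixed $t$) a sum of at most $k := \lceil\log T\rceil$ independent mean-zero Laplace random variables with scale $b$.

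Second I would invoke the standard concentration bound for such sums (as used in the continual-observation literature, e.g.\ \cite{chan2011private,dwork2010differential}): for $X_1,\dots,X_k$ i.i.d.\ $\mathrm{Lap}(b)$ and any $\tau>0$,
\[ \Pr\!\left[\,\Big|\textstyle\sum_i X_i\Big| \geq \tau\right] \;\leq\; 2\exp\!\left(-\min\!\Big(\tfrac{\tau^2}{8kb^2},\, \tfrac{\tau}{4b}\Big)\right). \]
Setting $\tau = \frac{4\sqrt{2}\,\lceil\log T\rceil}{\epsilon}\ln(2Td/\gamma)$ and plugging in $b = 2\lceil\log T\rceil/\epsilon$ and $k = \lceil\log T\rceil$, one checks that this bound is at most $\gamma/(Td)$. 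So for any fixed $t$ and any fixed coordinate $j\in\{1,\dots,d\}$, the probability that $|\hat{q}^t(j) - q^t(j)|$ exceeds $\tau$ is at most $\gamma/(Td)$.

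Third I would take a union bound over the $T$ time steps and the $d$ coordinates, which succeeds with total probability at least $1-\gamma$. On this high-probability event, for every $t$ we have $|\hat{q}^t(j) - q^t(j)| \leq \tau$ coordinatewise, and summing over $j=1,\dots,d$ gives $\|\hat{q}^t - q^t\|_1 \leq d\tau = \frac{4\sqrt{2}\,d\,\lceil\log T\rceil}{\epsilon}\ln(2Td/\gamma)$, as claimed. The main (and only really delicate) step is verifying the constant in the Laplace-sum tail bound so that the choice of $\tau$ above lands cleanly in the sub-exponential regime; this is purely a calculation but is the place where one must be careful to match the stated constant $4\sqrt{2}$.
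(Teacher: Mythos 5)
Your structural approach is exactly the paper's: unwind $\hat q^t - q^t = z^t + z^{s(t)} + \cdots + z^0$, note the chain has length at most $\lceil\log T\rceil$, apply a one-dimensional concentration bound for sums of $k=\lceil\log T\rceil$ i.i.d.\ $\mathrm{Lap}(b)$ variables to each coordinate, and finish by a union bound over $Td$ coordinate--time pairs. The gap is in the concentration step, which is precisely the place you yourself flagged as delicate.

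Your quoted tail bound
\[ \Pr\Bigl[\,\bigl|\textstyle\sum_i X_i\bigr| \geq \tau\Bigr] \;\leq\; 2\exp\Bigl(-\min\bigl(\tfrac{\tau^2}{8kb^2},\, \tfrac{\tau}{4b}\bigr)\Bigr) \]
is valid, but it does not yield $\gamma/(Td)$ at $\tau = 2\sqrt{2}\,b\ln(2Td/\gamma)$. The relevant regime here is $\ln(2Td/\gamma) > k$ (the paper explicitly operates there); then $\tau > 2\sqrt{2}kb$ and the minimum is taken at $\tau/(4b) = \ln(2Td/\gamma)/\sqrt 2$, giving $2\bigl(\gamma/(2Td)\bigr)^{1/\sqrt 2}$. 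Since $\gamma/(2Td) < 1$ and $1/\sqrt 2 < 1$, this quantity is \emph{larger} than $\gamma/(Td)$, so the union bound does not close. You would need the sub-exponential exponent to be $\tau/(2\sqrt 2 b)$ rather than $\tau/(4b)$, or else inflate $\tau$ to $4b\ln(2Td/\gamma)$, which would only prove the lemma with constant $8$ in place of $4\sqrt 2$. The paper gets the stated constant by invoking Dwork--Roth Corollary~12.3 with an inflated variance proxy $\nu = b\sqrt{\ln(2/\gamma')}$ (legal precisely because $\ln(2/\gamma') > k = \lceil\log T\rceil$ implies $\nu \geq b\sqrt k$), so that the Gaussian-type branch $\exp(-\lambda^2/8\nu^2)$ with $\lambda = 2\sqrt 2 b\ln(2/\gamma')$ evaluates exactly to $\gamma'/2$. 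If you want to stick with a two-regime bound, carry the sharper constant $1/(2\sqrt 2 b)$ (derived from the Chernoff step at $\theta$ approaching $1/(\sqrt 2 b)$) and the calculation closes.
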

\begin{proof}
  For each $t$, each coordinate $i$ of $q^t - \hat{q}^t$ is the sum of at most $\lceil \log T \rceil$ independent variables distributed $Lap(2 \lceil \log T \rceil / \epsilon)$.
  We will choose $\beta$ such that each coordinate's absolute value exceeds $\beta$ with probability $\frac{\gamma}{Td}$; there are $d$ coordinates per time step and $T$ time steps, so a union bound gives the result.

  Choose $\beta$ such that, if $Y$ is the sum of $k = \lceil \log T \rceil$ independent $Lap(b)$ variables with $b = 2 \lceil \log T \rceil / \epsilon)$ variables, then
    \[ \Pr[|Y| > \beta] = \gamma' . \]
  A concentration bound for the sum of $k$ independent $Lap(b)$ variables, Corollary 12.3 of \citet{dwork2014algorithmic}\footnote{In the parameters of that Corollary, we choose $\nu = b \sqrt{\ln(2/\gamma')}$ as we will have $\ln(2/\gamma') > k$.} gives
    \[ \beta \leq 2\sqrt{2} b \ln \frac{2}{\gamma'} . \]
  Now choose $\gamma' = \frac{\gamma}{Td}$.
  To recap, each $|q^t(i) - \hat{q}^t(i)| \leq \beta$ except with probability $\gamma' = \frac{\gamma}{Td}$, hence by a union bound this holds for all $t,i$ except with probability $\gamma$, hence $\|q^t - \hat{q}^t\|_1 \leq d \beta$ except with probability $\gamma$.
\end{proof}

As mentioned above, the previous results (Theorem \ref{thm:classic-private} and Lemma \ref{lem:classic-private-sharevec-accurate}) do not depend on $\lambda$ at all, because they do not mention the prices.
We now ask what a ``reasonable'' choice of $\lambda$ can be so that the prices are interpretable as predictions, i.e. the prices are ``accurate''.
\begin{theorem}[Accuracy of prices] \label{thm:classic-private-accurate}
  In the private mechanism, let $p^t = \nabla C(q^t)$ and let $\hat{p}^t = \nabla C(\hat{q}^t)$.
  Then to satisfy $\|p^t - \hat{p}^t\|_1 \leq \alpha$ for all $t$, except with probability $\gamma$, it suffices for the price sensitivity to be
   \[ \lambda^* = \frac{\alpha ~ \epsilon}{4\sqrt{2} d \lceil \log T \rceil \ln(2Td/\gamma)} ~. \]
\end{theorem}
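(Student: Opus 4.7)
The plan is to reduce the price accuracy claim to the share-vector accuracy claim already proved in Lemma~\ref{lem:classic-private-sharevec-accurate}, using price sensitivity as the Lipschitz bridge between the share space and the price space. This is a short, essentially mechanical argument, so the ``plan'' will mostly consist of stringing together two ingredients and solving for $\lambda$.

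First, I would invoke Lemma~\ref{lem:classic-private-sharevec-accurate} to get, with probability at least $1-\gamma$, the uniform (in $t$) bound
\[
\|q^t - \hat{q}^t\|_1 \leq \frac{4\sqrt{2}\, d \lceil \log T\rceil}{\epsilon}\,\ln\!\left(\frac{2Td}{\gamma}\right).
\]
This bound is exactly what the continual-observation construction gives us and makes no reference to $\lambda$, since the share-space noise is independent of the cost-function scaling.

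Second, I would pass from share space to price space using the definition of price sensitivity. Recall that $\lambda$ is defined so that the Hessian of $C$ has operator norm at most $\lambda$ with respect to $\ell_1$; equivalently, $\nabla C$ is $\lambda$-Lipschitz in the $\ell_1$ norm. Applying this to $q^t$ and $\hat{q}^t$ gives
\[
\|p^t - \hat{p}^t\|_1 = \|\nabla C(q^t) - \nabla C(\hat{q}^t)\|_1 \leq \lambda \, \|q^t - \hat{q}^t\|_1,
\]
which on the high-probability event becomes $\lambda \cdot \frac{4\sqrt{2}\, d \lceil \log T\rceil}{\epsilon}\,\ln(2Td/\gamma)$.

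Finally, I would set this upper bound to at most $\alpha$ and solve for $\lambda$, obtaining precisely the stated threshold $\lambda^*$. There is no real obstacle: the only subtlety worth mentioning is that the high-probability event is uniform over all $t$ (which is already handled by the union bound inside Lemma~\ref{lem:classic-private-sharevec-accurate}), so the same single event witnesses $\|p^t - \hat{p}^t\|_1 \leq \alpha$ simultaneously for every $t$. The Lipschitz step is pointwise in $t$, so no additional union bound is needed at the price-space stage.
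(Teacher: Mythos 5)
Your proposal is correct and follows exactly the same route as the paper: invoke Lemma~\ref{lem:classic-private-sharevec-accurate} for the high-probability bound on $\|q^t - \hat{q}^t\|_1$, multiply by $\lambda$ via the Lipschitzness of $\nabla C$ encoded in the price-sensitivity definition, and solve for $\lambda$. Nothing to add.
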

\begin{proof}
  By definition of $\lambda$, we have
  \begin{align}
    \|p^t - \hat{p}^t\|_1 &\leq \lambda \|q^t - \hat{q}^t\|_1  \nonumber \\
                          &\leq \lambda \frac{4\sqrt{2} d \lceil \log T \rceil}{\epsilon} \ln\left(\frac{2 T d}{\gamma}\right)  \label{eqn:lambda-for-accuracy}
  \end{align}
  for all $t$ except with probability $\gamma$, by Lemma \ref{lem:classic-private-sharevec-accurate}.
  We now just choose $\lambda$ so that $(\ref{eqn:lambda-for-accuracy}) \leq \alpha$.
\end{proof}

\section{Slowly-Growing Budget}

We prove the incentive and budget claims in separate lemmas.
\begin{lemma}[Incentive to trade] \label{lemma:fee-incentive}
  In the private market with transaction fee $\alpha$, a participant at time $t$ having belief $p$ with $\|p - \hat{p}^t\|_{\infty} \geq 2\alpha$ can make a strictly positive expected profit by participating.
\end{lemma}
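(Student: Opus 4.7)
The plan is to exhibit an explicit trade that yields strictly positive expected profit, rather than argue existence abstractly. A participant with belief $p$ proposing bundle $dq$ at transaction fee $\alpha$ obtains expected profit
\[
\Pi(dq) \;=\; dq \cdot p \;-\; \bigl[C(\hat{q}^t + dq) - C(\hat{q}^t)\bigr] \;-\; \alpha.
\]
Since $\|p - \hat{p}^t\|_{\infty} \geq 2\alpha$, I choose the coordinate $j^*$ achieving this maximum and the matching sign $\sigma = \mathrm{sign}\bigl((p - \hat{p}^t)_{j^*}\bigr) \in \{\pm 1\}$, so that the unit vector $v = \sigma\, e_{j^*}$ satisfies $v \cdot (p - \hat{p}^t) \geq 2\alpha$ and $\|v\|_1 = 1$. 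This is the only use of the hypothesis.

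The main technical step is an upper bound on the market cost of a unit trade. By the definition of price sensitivity, $\nabla C$ is $\lambda$-Lipschitz with respect to $\|\cdot\|_1$, so writing $C(\hat{q}^t + v) - C(\hat{q}^t) = \int_0^1 \nabla C(\hat{q}^t + s v) \cdot v \, ds$ and applying H\"older (using $\|v\|_\infty \leq \|v\|_1 = 1$) gives
\[
\bigl|\bigl(\nabla C(\hat{q}^t + s v) - \nabla C(\hat{q}^t)\bigr) \cdot v\bigr| \;\leq\; s\lambda \|v\|_1 \|v\|_\infty \;\leq\; s\lambda,
\]
whence $C(\hat{q}^t + v) - C(\hat{q}^t) \leq v \cdot \hat{p}^t + \tfrac{\lambda}{2}$. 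Substituting into $\Pi(v)$ yields
\[
\Pi(v) \;\geq\; v \cdot (p - \hat{p}^t) - \tfrac{\lambda}{2} - \alpha \;\geq\; \alpha - \tfrac{\lambda}{2}.
\]

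To conclude, I observe that $\lambda \leq \lambda^* < 2\alpha$ under any reasonable parameter regime: equation~\eqref{eq:lambda-star} gives $\lambda^* \leq \alpha\epsilon/(4\sqrt{2})$ because $d, \lceil \log T \rceil, \ln(2Td/\gamma) \geq 1$, and typical differential privacy parameters satisfy $\epsilon \leq 1$. Hence $\Pi(v) > 0$, proving the lemma. The only nontrivial step is the Hessian-based cost upper bound; the rest is linear arithmetic. A conceptually cleaner alternative would be to rewrite $\Pi(dq) + \alpha$ as an improvement in Bregman divergence between the trader's belief and $\hat{p}^t$ via conjugate duality and argue positivity whenever beliefs are farther than $2\alpha$ from prices, but the direct computation is shorter and tight enough here.
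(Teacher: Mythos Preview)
Your proof is correct and follows essentially the same approach as the paper: select the coordinate direction realizing $\|p-\hat{p}^t\|_\infty$, bound the cost of that unit trade using the price-sensitivity Lipschitz property of $\nabla C$, and compare the resulting linear gain $v\cdot(p-\hat{p}^t)\geq 2\alpha$ to the fee. The paper uses the convexity inequality $C(\hat{q}^t)-C(\hat{q}^t+dq)\geq -\langle \nabla C(\hat{q}^t+dq),dq\rangle$ to obtain a loss term of size $\lambda$, whereas your integral argument yields the slightly sharper $\lambda/2$; this is a cosmetic difference, not a different route.

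One small remark: your final step invokes ``$\epsilon\leq 1$'' to force $\lambda^*<2\alpha$, but this extra hypothesis is unnecessary. The paper simply takes $\lambda<\alpha$ as part of the construction (the formula for $\lambda^*$ makes this automatic for any nontrivial $T,d,\gamma$), and you only need the weaker $\lambda<2\alpha$. So you can drop the appeal to a ``typical'' $\epsilon$ and just cite $\lambda\leq\lambda^*<\alpha$ by construction.
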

\begin{proof}
  Ignoring the transaction fee, the expected profit from purchasing $dq$ (recall $\|dq\|_1 \leq 1$) is 
  \begin{align*}
    \text{profit} ~ &= \langle dq, p \rangle + C(\hat{q}^t) - C(\hat{q}^t + dq) .
  \end{align*}
  Because $C$ is convex, $C(\hat{q}^t) - C(\hat{q}^t + dq) \geq \langle \nabla C(\hat{q}^t + dq),-dq\rangle$.
  So
  \begin{align*}
    \text{profit} ~ &\geq \langle dq, p - \nabla C(\hat{q}^t + dq) \rangle  \\
                    &=    \langle dq, p - \hat{p}^t \rangle - \langle dq, \nabla C(\hat{q}^t + dq) - \hat{p}^t \rangle .
  \end{align*}
  By H\"{o}lder's inequality and the definition of price sensitivity,
  \begin{align*}
    \langle dq, \nabla C(\hat{q}^t + dq) - \hat{p}^t \rangle
      &\leq \|dq\|_{\infty} \| \nabla C(\hat{q}+dq) - \hat{p}\|_1  \\
      &\leq \|dq\|_{\infty} \lambda \|dq\|_1  \\
      &\leq \lambda .
  \end{align*}
  So we have
  \begin{align*}
    \text{max profit} ~ &\geq \max_{dq: \|dq\|_1 \leq 1} \langle dq, p - \hat{p}^t \rangle - \lambda  \\
                        &=    \|p - \hat{p}\|_{\infty} - \lambda  \\
                        &\geq 2\alpha - \lambda  \\
                        &>    \alpha
  \end{align*}
  as $\lambda < \alpha$ by construction.
  Because there exists a trade with expected profit strictly above $\alpha$, the trader has an incentive to pay the $\alpha$ transaction fee and participate.
\end{proof}

\begin{claim}[Claim~\ref{claim:traders-between-noise-bundle}] \label{claim:traders-between-noise-bundle-app}
    For each $t$, exactly $b(t)$ traders arrive between the purchase and the sale of bundle $z^t$; furthermore, $q_{\text{sell}}^t - q_{\text{buy}}^t$ is exactly equal to the sum of these participants' trades.
  \end{claim}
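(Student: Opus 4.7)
The plan is to track, for each fixed $t$, exactly when the noise bundle $z^t$ is sold, and to show that any noise bundle bought in the interim has also been unwound before $z^t$'s own sale. Both facts reduce to a small calculation with $2$-adic valuations of the schedule.

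Write $t = 2^j m$ with $m$ odd, so that $b(t)=2^j$. I would first prove the following sub-claim:

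\textbf{(i) $z^t$ is sold at time $t+b(t)$ and at no earlier time.} At any step $t' = 2^{j'}m'$ ($m'$ odd), the scheme sells the bundles $z^s$ with $s \in [t'-2^{j'-1}, t'-1]$ that are still alive. A short computation shows these are exactly the bundles of the form $s = t' - 2^{j_s}$ for $j_s \in \{0,\dots,j'-1\}$, because for any other $s$ in that window, $v_2(t'-s) < j_s$ forces $s + b(s) < t'$, meaning the bundle was already removed at an earlier step. Applying this characterization with $s=t$ gives $t' - 2^j = t$, i.e.\ $t' = t+b(t)$, and one checks that no smaller $t'>t$ satisfies the required divisibility.

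\textbf{(ii) For every $s$ with $t<s<t+b(t)$, we have $s+b(s)\le t+b(t)$, and moreover whenever equality holds $z^s$ is sold strictly before $z^t$.} Writing $s = t+r$ with $1\le r<2^j$ and letting $v = v_2(r) < j$, a direct calculation gives $s+b(s) = t + 2^v(q+1)$ where $r = 2^v q$ with $q$ odd, and $q+1 \le 2^{j-v}$ since $q\le 2^{j-v}-1$. Equality $s+b(s)=t+b(t)$ occurs only for $r = 2^j - 2^v$, giving the bundles $s = t+2^j-2^v$ for $v=0,\dots,j-1$; by the ``reverse order of purchase'' convention, each of these is sold at step $t+b(t)$ strictly before $z^t$. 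All other bundles $z^s$ in the interval are sold strictly earlier.

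With (i) and (ii) in hand, the claim follows by accounting. Between the moment $q_{\text{buy}}^t$ (just before $z^t$ is purchased) and the moment $q_{\text{sell}}^t$ (just after $z^t$ is sold), the participants who arrive are exactly those of times $t+1,t+2,\dots,t+b(t)$, contributing the trades $dq^{t+1},\dots,dq^{t+b(t)}$, which gives the count $b(t)$. The noise trader also buys each $z^s$ for $t<s<t+b(t)$ and, by (ii), sells each of them back before $z^t$ itself is unwound, so all intermediate noise purchases/sales cancel in the state difference. Therefore $q_{\text{sell}}^t - q_{\text{buy}}^t = \sum_{s=t+1}^{t+b(t)} dq^s$, as claimed.

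The main obstacle is bookkeeping: the scheme is recursive and it's easy to miscount when a given bundle is both freshly purchased and unwound within the window $[t,t+b(t)]$. The $2$-adic description in step (ii) is what makes this bookkeeping clean; once one identifies $s+b(s)$ as the (unique) sale time of $z^s$, the rest is a straightforward inequality in the valuations, and the reverse-order-of-sale convention handles the tie-breaking at step $t+b(t)$.
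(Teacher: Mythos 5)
Your proof is correct and takes essentially the same approach as the paper: both exploit the binary (equivalently, $2$-adic) structure of the noise-trade schedule to pin down the sale time of each bundle. The paper phrases this in terms of bit-flips in the binary expansion of $t$ (the next time bit $\log b(t)$ flips to zero is $t+b(t)$), while you make it more computational with valuations, deriving the clean identity that $z^s$ is sold at exactly $s+b(s)$ and then checking $s+b(s)\le t+b(t)$ for $t<s<t+b(t)$ by writing $s=t+2^vq$. Your sub-claim (ii) is a nice explicit version of the paper's more qualitative statement that later-bought bundles are sold no later. One small omission worth patching: you verify that bundles bought in $(t,t+b(t))$ are all unwound before $z^t$, but you do not explicitly verify the other direction — that no bundle held from before time $t$ is sold during this window (and that any such bundle sold at time $t+b(t)$ is sold \emph{after} $z^t$ by the reverse-order convention). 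The paper addresses this by noting the surviving pre-$t$ bundles have $b(s)>b(t)$. Your claim (i) already gives the tools — bundles sold at $t'\in(t,t+b(t))$ have the form $t'-2^k$ with $k<v_2(t'-t)<j$, hence lie strictly in $(t,t')$ — but the argument as written only cancels the ``new'' bundles, not the ``old'' ones. With that sentence added, the accounting is complete.
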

  \begin{proof}
    Note that if we write $t$ in binary, it has a one in the bit position $\log b(t)$, followed by zeros.
    By definition of the algorithm, $z^t$ is sold at the next time $t' > t$ where the bit $\log b(t)$ is flipped to zero.
    So we have $t' - t = b(t)$, so $b(t)$ traders have arrived.

    Now we want to show that $q_{\text{sell}}^t - q_{\text{buy}}^t$ is the sum of their trades, i.e. that every noise trade bundle $z^{t'}$ held by the trader before buying $z^t$ is still held at the moment of selling $z^t$, and no other noise trade bundles are held at that time.

    Consider all of the noise bundles that were already held at time $t$ (after selling the appropriate bundles at that time, but before purchasing $z^t$).
    By definition of the algorithm, these were purchased at times $s$ with $b(s) > b(t)$, so by the above discussion, they are not sold until times $t'$ where the bits in positions $\log b(s)$ are flipped to zero, which cannot happen until after bit $\log b(t)$ is flipped and $z^t$ is sold.
    Meanwhile, every bundle purchased after $z^t$ is sold by, at the latest, the same time that $z^t$ is sold, as they correspond to lower-order bits; and any sold at the same time as $z^t$ are sold first because they were purchased later.
  \end{proof}

  \begin{claim}[Claim~\ref{claim:max-gap}] \label{claim:max-gap-app}
    If the noise trader purchases and later sells $z^t$, then her net loss in expectation over $z^t$ (but for any trader behavior in response to $z^t$), is at most $\lambda b(t) K$ where $K = \E \|z^t\|_2$.
  \end{claim}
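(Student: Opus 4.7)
The plan is to decompose the noise trader's round-trip cost on bundle $z^t$ into its buy and sell legs, bound each via the supporting hyperplane inequality from convexity of $C$, and then invoke the price-sensitivity bound to control the resulting gradient difference.

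Write $q_{\text{buy}} := q_{\text{buy}}^t$, $q_{\text{sell}} := q_{\text{sell}}^t$, and $\Delta := q_{\text{sell}} - q_{\text{buy}}$. Claim~\ref{claim:traders-between-noise-bundle-app}, together with $\|dq^s\|_1 \leq 1$ per participant, gives $\|\Delta\|_1 \leq b(t)$. The noise trader's net loss on $z^t$ is
\[
L_t = \bigl[C(q_{\text{buy}} + z^t) - C(q_{\text{buy}})\bigr] + \bigl[C(q_{\text{sell}} - z^t) - C(q_{\text{sell}})\bigr],
\]
since selling $z^t$ at state $q_{\text{sell}}$ amounts to purchasing the bundle $-z^t$.

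The next step is to apply the supporting hyperplane inequality $C(x+y)-C(x) \leq \nabla C(x+y) \cdot y$ to each bracket, obtaining
\[
L_t \leq \bigl[\nabla C(q_{\text{buy}} + z^t) - \nabla C(q_{\text{sell}} - z^t)\bigr] \cdot z^t.
\]
I would then apply Cauchy--Schwarz and use $\|\cdot\|_2 \leq \|\cdot\|_1$ together with the price-sensitivity bound on the Hessian to conclude
\[
L_t \leq \lambda \,\|2z^t - \Delta\|_1\, \|z^t\|_2 \leq \lambda\bigl(2\|z^t\|_1 + b(t)\bigr)\, \|z^t\|_2.
\]
Taking expectations isolates the desired $\lambda b(t)\, \E\|z^t\|_2 = \lambda b(t) K$ contribution.

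The main obstacle is the residual $2\lambda\, \E[\|z^t\|_1\, \|z^t\|_2]$, a ``self-curvature'' cost of buying and then selling $z^t$ in the same round, which does not depend on $b(t)$. Because ``trader behavior in response to $z^t$'' allows $\Delta$ to depend adversarially on $z^t$ through the observed public state, I cannot kill first-order terms by invoking $\E z^t = 0$ directly. I would handle this in one of two ways: (i) absorb the $b(t)$-independent moment into a slightly larger constant $K$, since $\E[\|z^t\|_1\, \|z^t\|_2]$ is a fixed Laplace moment and only inflates the final bound by a factor that is dominated once aggregated across bundles; or (ii) refine the bound via a second-order Taylor expansion around $q_{\text{buy}}$, whereby the linear-in-$z^t$ piece becomes $z^t \cdot (\nabla C(q_{\text{buy}}) - \nabla C(q_{\text{sell}}))$, bounded directly by $\lambda b(t) \|z^t\|_2$ via price sensitivity applied to $\Delta$, and the quadratic self-interaction becomes a lower-order correction.
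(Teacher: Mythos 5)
There is a genuine gap: your two proposed workarounds for the $b(t)$-independent residual do not actually recover the claimed bound, and the missing ingredient is the paper's choice of \emph{where} to compare the buy and sell legs. You expand the two legs around $q_{\text{buy}}$ and $q_{\text{sell}}$, which are separated by (roughly) $\Delta + z^t$; this makes the first-order term $\lambda(\|z^t\|_1 + b(t))\|z^t\|_2$ and makes the two Hessian terms \emph{add with the same sign} (each is $\tfrac12 (z^t)^\top H(\cdot) z^t \geq 0$ by convexity of $C$), so the ``quadratic self-interaction'' is not a lower-order correction at all --- it is of exactly the same order as the residual you are trying to kill. The paper instead writes the round-trip cost as $[C(q+z^t)-C(q)] - [C(q+r+z^t)-C(q+r)]$, i.e.\ it compares the \emph{same} buy operation executed from states $q$ and $q+r$, and then applies the fundamental theorem of calculus with the synchronized parameterization $x \mapsto q + x z^t$ vs.\ $x \mapsto q + r + x z^t$. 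At every $x$ the two gradient evaluation points differ by exactly $r$ (never by anything involving $z^t$), so the integrand is bounded by $\lambda \|r\| \|z^t\|$ pointwise, yielding $\lambda b(t) \|z^t\|_2$ with no $z^t$-dependent residual. Equivalently, in the Taylor picture this parameterization makes the two Hessian contributions appear with \emph{opposite} signs and cancel; your choice of expansion points obscures this cancellation.

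Your workaround (i) also does not salvage the argument: $\E[\|z^t\|_1\|z^t\|_2]$ is independent of $b(t)$, so summing it over all $T'$ bundles contributes $\Theta(T')\cdot\lambda\cdot\E[\|z^t\|_1\|z^t\|_2]$, which with $\|z^t\|_1 \approx d b$ and $\|z^t\|_2 \approx \sqrt d\, b$ for $b = 2\lceil \log T\rceil/\epsilon$ exceeds the paper's $\tfrac{T'\log T'}{2}\lambda K$ by roughly a factor of $d/\epsilon$ --- not a ``slight inflation,'' and it would force a smaller $\lambda$ and a larger final budget than the theorem claims. Separately, note a notational inconsistency: your sell-leg cost formula $C(q_{\text{sell}}-z^t) - C(q_{\text{sell}})$ treats $q_{\text{sell}}$ as the state just \emph{before} the sale, but Claim~\ref{claim:traders-between-noise-bundle-app} (and hence the bound $\|\Delta\|_1 \leq b(t)$) uses $q_{\text{sell}}$ as the state just \emph{after} the sale; with your convention $\Delta$ would include a $z^t$ term and the claimed bound $\|\Delta\|_1 \leq b(t)$ would not hold.
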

  \begin{proof}
    Given the noise trader's bundle drawn is $z^t$, her loss is:
      \[ C(q_{\text{buy}}^t + z^t) - C(q_{\text{buy}}^t) + C(q_{\text{sell}}^t) - C(q_{\text{sell}}^t + z^t) . \]
    The first pair of terms represents the payment made to purchase $z^t$ (moving the market state from $q_{\text{buy}}^t$); the second pair represents the payment to sell $z^t$ (moving the state to $q_{\text{sell}}^t$).
    Claim \ref{claim:traders-between-noise-bundle-app} implies that $\|q_{\text{buy}}^t - q_{\text{sell}}^t\|_1 \leq b(t)$, as each trader can buy or sell at most $1$ unit of shares.
    Therefore, the net loss on bundle $z^t$ is at most
    \begin{align*}
      &\E_{z^t}   ~ \max_{q, q' : \|q - q'\|_1 \leq b(t)} C(q + z^t) - C(q) + C(q') - C(q' + z^t)
    \end{align*}
    Now, we have
    \begin{align*}
      C(q + z^t) - C(q)
        &= \int_{x=0}^1 \nabla C(q + x z^t) \cdot z^t dx , \\
      C(q + r + z^t) - C(q + r)
        &= \int_{x=0}^1 \nabla C(q + r + x z^t) \cdot z^t dx .
    \end{align*}
    So the difference is
    \begin{align*}
      &\int_{x=0}^1 \left\langle \nabla C(q + x z^t) - \nabla C(q + r + x z^t) ~,~ z^t \right\rangle dx \\
      &\leq \int_{x=0}^1 \lambda \|r\|_2 \|z^t\|_2 dx  \\
      &= \lambda \|r\|_2 \|z^t\|_2
    \end{align*}
    by definition of price sensitivity $\lambda$.
    We also have $\|r\|_2 \leq \|r\|_1 \leq b(t)$.
    This bound holds for each outcome of $z^t$ and any behavior of the participants, so we conclude the lemma statement, that expected loss is bounded by $\lambda b(t) \E \|z^t\|_2$.
  \end{proof}

\section{Constant Budget Bound}

\begin{lemma} \label{lemma:tlogt-bound}
  Let $A,D$ be constants at least $1$ with $AD \geq 5$.
  Then for all $T \geq 9A \left(\ln(AD)\right)^2$, we have $T \geq A \left(\ln(TD)\right)^2$.
\end{lemma}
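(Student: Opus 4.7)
The plan is to treat the inequality as a statement about the function $f(T) := T - A(\ln(TD))^2$, and show $f(T) \geq 0$ for all $T \geq T_0 := 9A(\ln(AD))^2$ by combining two facts: (i) $f(T_0) \geq 0$, and (ii) $f$ is non-decreasing on $[T_0, \infty)$. This is a standard ``check the base case, then show monotonicity'' structure, and both pieces reduce to elementary log-vs-polynomial estimates.

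For (i), plugging $T_0$ into $f$ and factoring out $A$, the claim becomes $9(\ln AD)^2 \geq (\ln AD + \ln 9 + 2\ln\ln(AD))^2$. Since both sides are positive, taking square roots it suffices to show $3\ln(AD) \geq \ln(AD) + \ln 9 + 2\ln\ln(AD)$, i.e., $2\ln(AD) \geq \ln(9(\ln AD)^2)$, which after exponentiating is $AD \geq 3\ln(AD)$. Writing $x = AD \geq 5$, I would verify the base case $x=5$ numerically (where $5 - 3\ln 5 \approx 0.17 > 0$) and extend to all $x \geq 5$ via the derivative $1 - 3/x > 0$.

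For (ii), a direct computation gives $f'(T) = 1 - 2A\ln(TD)/T$, so non-negativity of $f'$ reduces to $T \geq 2A\ln(TD)$ on $[T_0,\infty)$. Define $h(T) := T - 2A\ln(TD)$; then $h'(T) = 1 - 2A/T \geq 0$ for $T \geq 2A$, and $T_0 \geq 9A(\ln 5)^2 > 2A$, so $h$ is increasing on $[T_0,\infty)$. Hence it suffices to verify $h(T_0) \geq 0$, which is a similar (but easier) log-vs-polynomial estimate under the substitution $u := \ln(AD) \geq \ln 5 > 1$: the claim becomes $9u^2 \geq 2(\ln 9 + u + 2\ln u)$, which holds at $u = \ln 5$ and extends by comparing growth rates.

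The only delicate step is the base case $f(T_0) \geq 0$, which bottoms out at $AD \geq 3\ln(AD)$; this is precisely where the hypothesis $AD \geq 5$ (rather than the weaker $AD \geq 1$) is used. Everything else is routine monotonicity.
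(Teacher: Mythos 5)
Your proposal is correct and follows essentially the same two-step structure as the paper's proof: verify the inequality at $T_0=9A(\ln(AD))^2$ (which both arguments reduce to the elementary bound $AD\geq 3\ln(AD)$ for $AD\geq 5$), then extend by monotonicity. The one place you go beyond the paper is in the monotonicity step: the paper checks $f'(T)\leq 1-\text{(rhs derivative)}\geq 0$ only at $T=T^*$ and then waves at ``transfinite induction,'' whereas you establish $f'\geq 0$ on all of $[T_0,\infty)$ by introducing $h(T)=T-2A\ln(TD)$ and showing it is increasing with $h(T_0)\geq 0$ --- a cleaner and fully rigorous version of the same idea.
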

\begin{proof}
  Let $T^* = 9A\left(\ln(AD)\right)^2$.
  First, we prove the inequality for $T^*$:
  \begin{align*}
    T^* &= 9A\left(\ln(AD)\right)^2  \\
        &= A \left(3\ln(AD)\right)^2  \\
        &= A \left(\ln(AD (AD)^2) \right)^2.
  \end{align*}
  Now for all $AD \geq 5$, we have $AD \geq 3 \ln(AD)$, so
  \begin{align*}
    T^* &\geq A \left(\ln\left(9AD (\ln(AD))^2\right)\right)^2  \\
        &=    A \left(\ln(T^*D)\right)^2 ,
  \end{align*}
  as desired.
  Now we wish to extend this to all $T \geq T^*$.
  Compare the derivative of the left side, $\frac{dT}{dT} = 1$, with that of the right side:
  \begin{align*}
    \frac{d}{dT}\left(A \left(\ln(TD)\right)^2\right)
      &=    \frac{2A \ln(TD)}{T}  \\
      &\leq \frac{2}{\ln(TD)} \leq 1
  \end{align*}
  at $T = T^*$.
  Now if the inequality holds for all $T' \in [T^*,T)$, then it holds for $T$ as the left side only increases more quickly than the right.
  So by transfinite induction, it holds for all $T \geq T^*$.
\end{proof}

\begin{proof}[Proof of Theorem \ref{thm:main}]
Fix the parameters $\epsilon$ and $d$ throughout.
Let $\lambda^*(T, \alpha, \gamma)$ be the price sensitivity parameter as a function of these variables given in Theorem \ref{thm:classic-private-accurate}.

$\epsilon$-differential privacy of the market follows by the post-processing property of differential privacy~\citep{dwork2014algorithmic} because each stage $k$ is differentially private for the participants who arrive in that stage.
All information released after stage $k$ depends on these participants only through the noisy market state at the end of stage $k$, which is $\epsilon$-d.p.

To show the incentive guarantee, note that the transaction fee is always fixed at $\alpha$, so the incentive proof of Theorem \ref{thm:private-loss-one-over-lambda} goes through immediately.

To show the accuracy guarantee, note the the prices up to $T^{(1)}$ arrivals satisfy an $\alpha/2$ guarantee; therefore the starting prices of the new market are within $\alpha/2$ of what they would be without added noise.
The  prices up to $T^{(2)}$ additional arrivals are within $\alpha/2 + \alpha/4$ of what they would have been (since they begin within $\alpha/2$ and are designed to stay within $\alpha/4$ of this shifted goal); and so on, telescoping to at most $\alpha$.
Similarly, the chance of failure of any of these guarantees, by a union bound, is at most $\gamma/2 + \gamma/4 + \cdots \leq \gamma$.

Now we must show bounded worst-case loss, and how to set $T^{(k)}$.
We will choose $T^{(1)}$ to be a constant and each $T^{(k)} = 4T^{(k-1)}$.

We will claim two things:
\begin{enumerate}
  \item In the final stage $k$ where not all participants arrive, the market maker's loss is at most $\frac{\alpha}{16}T^{(k)}$.
  \item In each stage $k$ that is completed (all $T^{(k)}$ participants arrive), the market maker's profit from that stage is at least $\frac{\alpha}{2} T^{(k)}$.
\end{enumerate}
These together prove bounded worst-case loss: If at least one stage is completed, the total profit is in fact positive: it is positive from all but the last stage, whose loss is at most $\frac{\alpha}{16}T^{(k)} \leq \frac{\alpha}{4} T^{(k-1)}$ which is smaller than the profit made in stage $k-1$.
If no stages are completed, i.e. fewer than $T^{(1)}$ participants arrive, then expected worst-case loss is bounded by $\frac{B'}{\lambda^{(1)}}$.
This gives a budget bound of $\frac{B'}{\lambda^{(1)}}$, which will be computed below.

\paragraph{Proof of (1).} First, we must prove that the worst-case loss in stage $1$ is at most $\frac{\alpha}{16}T^{(1)}$.
In doing so, we will explicitly compute a sufficient $T^{(1)}$ and this worst-case loss.
Then, we must show the same fact for all other stages.

The worst-case loss in stage $1$, by Theorem \ref{thm:private-loss-one-over-lambda}, is
\begin{align*}
  B &= \frac{B'}{\lambda^{(1)}}  \\
    &= B' \frac{8\sqrt{2} d \lceil \log T^{(1)} \rceil \ln\left(4 T^{(1)} d / \gamma\right)}{\alpha ~ \epsilon}  \\
    &\leq B' \frac{8\sqrt{2} d \left(\ln\left(4 T^{(1)} d / \gamma\right)\right)^2}{\alpha ~ \epsilon} .
\end{align*}
For convenience, set $A' = B' \frac{8 \sqrt{2} d}{\alpha \epsilon}$ and set $D = 4d / \gamma$.
Then we have $B \leq A' \left(\ln(T^{(1)}D)\right)^2$; to prove claim (1) for stage $1$, we wish to pick $T^{(1)}$ such that $B \leq \frac{\alpha}{16}T^{(1)}$.
Setting $A = \frac{16}{\alpha}A'$, we need to have $A \left(\ln(T^{(1)}D)\right)^2 \leq T^{(1)}$.
By Lemma \ref{lemma:tlogt-bound}, this holds for
\begin{align*}
  T^{(1)} &= 9A \left(\ln(AD)\right)^2  \\
          &= B' \frac{1152\sqrt{2} d \left(\ln \frac{4608 B' \sqrt{2} d^2}{\gamma \alpha^2 \epsilon} \right)^2}{\alpha^2 ~ \epsilon}
\end{align*}
So for this choice of $T^{(1)}$, we have claim (1) for stage $1$.
We note the budget bound is $B \leq \frac{\alpha}{16}T^{(1)}$.
Now we just show that $T^{(k)}$ increases faster than $1/\lambda^{(k)}$.
$T^{(k)} = 4T^{(k-1)}$, but
\begin{align*}
  \frac{B}{\lambda^{(k)}} &= \frac{4\sqrt{2} B 2^k d \lceil \log T^{(k)} \rceil \ln\left(2 T^{(k)} d 2^k / \gamma\right)}{\alpha ~ \epsilon} \\
    &= 2 \frac{4\sqrt{2} B 2^{k-1} d \lceil 2 + \log T^{(k-1)} \rceil \left(\ln(8) + \ln\left(2 T^{(k-1)} d 2^{k-1} / \gamma\right)\right)}{\alpha ~ \epsilon}  \\
    &\leq 4 \frac{1}{\lambda^{(k-1)}}
\end{align*}
for sufficiently large $T^{(k-1)}$, i.e. if $T^{(1)}$ is a sufficiently large constant.
So $\frac{B}{\lambda^{(k)}}$ grows more slowly than $T^{(k)}$ and the inequality $\frac{B}{\lambda^{(k)}} \leq \frac{\alpha}{16} T^{(k)}$ continues to hold.

\paragraph{Proof of (2).}
Let us lower-bound the profit in stage $k$ if completed.
By Inequality \ref{eqn:exact-wcl-fee-bound} (from Theorem \ref{thm:private-loss-one-over-lambda}), that the market-maker profit if $T' = T^{(k)}$ participants arrive is
\begin{align*}
  &T^{(k)} \left(c - K \lambda^{(k)} \log T^{(k)}\right) - \frac{B}{\lambda^{(k)}}  \\
  &= T^{(k)} \left(\alpha - \frac{\sqrt{2d} \lceil \log T^{(k)} \rceil \log T^{(k)}}{\epsilon} \frac{(\alpha/2^k) ~ \epsilon}{4\sqrt{2} d \lceil \log T^{(k)} \rceil \ln(2d T^{(k)} 2^k/ \gamma)} \right) - \frac{B}{\lambda^{(k)}}  \\
  &= \alpha T^{(k)} \left(1 - \frac{\log T^{(k)}}{4 (2^k) \sqrt{d} \ln(2 d T^{(k)} 2^k / \gamma)} \right) - \frac{B}{\lambda^{(k)}} .
\end{align*}
Recall that $\frac{B}{\lambda^{(k)}} \leq \frac{\alpha}{16}T^{(k)}$.
We want to conclude that the profit in stage $k$ is at least $\frac{\alpha}{2}T^{(k)}$, so we just need to show that
  \[ 1 ~ - ~ \frac{\log T^{(k)}}{4 (2^k) \sqrt{d} \ln(2 d T^{(k)} 2^k / \gamma)} ~ - \frac{1}{16} \geq \frac{1}{2} . \]
The fraction is decreasing in $k$, so it suffices to achieve this for $k=1$, $d=1$, and $\gamma=1$, where we have
  \[ \frac{\log T^{(1)}}{8 \ln(4 T^{(1)})} \leq \frac{1}{4} . \]
This suffices to prove Claim (2).
\end{proof}

\end{document}